\providecommand{\tabularnewline}{\\}
\newtheorem{theorem}{Theorem}[section]
\newtheorem{remark}[theorem]{Remark}
\newtheorem{corollary}[theorem]{Corollary}
\newtheorem{lemma}[theorem]{Lemma}
\newtheorem{proposition}[theorem]{Proposition}
\newenvironment{inarxiv}{
}{
}
\newenvironment{inpaper}{
\comment
}{
\endcomment
}
\begin{document}
\title{Instabilizability Conditions for Continuous-Time Stochastic Systems Under Control Input Constraints} 
\author{Ahmet Cetinkaya, \IEEEmembership{Member, IEEE}, Masako Kishida, \IEEEmembership{Senior Member, IEEE}
\thanks{This work was supported by JST ERATO HASUO Metamathematics for Systems Design Project (No.\ JPMJER1603) and by JSPS KAKENHI Grant Number 20K14771.}
\thanks{Ahmet Cetinkaya and Masako Kishida are with the National Institute of Informatics, Tokyo, 101-8430, Japan.  (e-mail: {\mbox cetinkaya@nii.ac.jp, kishida@nii.ac.jp}).}}

\maketitle
\thispagestyle{empty}

\begin{abstract}In this paper, we investigate constrained control
of continuous-time linear stochastic systems. We show that for certain
system parameter settings, constrained control policies can never
achieve stabilization. Specifically, we explore a class of control
policies that are constrained to have a bounded average second moment
for Ito-type stochastic differential equations with additive and multiplicative
noise. We prove that in certain settings of the system parameters
and the bounding constant of the control constraint, divergence of
the second moment of the system state is inevitable regardless of
the initial state value and regardless of how the control policy is
designed. \end{abstract} 

\begin{IEEEkeywords} Stochastic systems, constrained control, linear systems\end{IEEEkeywords}

\section{Introduction}

\IEEEPARstart{S}{tabilization} under control input constraints is
an important research problem due to its wide applicability to systems
with actuator saturation. The works \cite{tarbouriech2011stability,book-saberi2012internal}
describe the challenges of this problem and provide comprehensive
discussions of the important results. A key result on this problem
is an impossibility result: linear deterministic systems with strictly
unstable system matrices cannot be globally stabilized if the norm
of the control input is constrained to stay below a constant threshold
\cite{necessary-condition-null-controllability-sontag1984algebraic,sussmann1994}. 

There is a rapidly growing interest in exploring control input constraints
for stochastic systems. For instance, \cite{stochastic-mpc-hokayem2012stochastic,stochastic-average-constraint-korda2014stochastic,hewing2019scenario,mark2020}
proposed stochastic model predictive controllers with control constraints;
\cite{pereira2019learning} and \cite{wang2020reinforcement} developed
reinforcement learning control frameworks with constraints, \cite{chang2021}
investigated fuzzy controllers for stochastic systems with actuator
saturation. Constrained control of nonlinear stochastic systems was
investigated by \cite{ying2006stochastically} and \cite{stochastic-design-min2018output},
and moreover, \cite{stochastic-networked-mishra2018sparse} explored
control constraints in stochastic networked control systems. 

The work \cite{chatterjee2012onmeansquare} presented an impossibility
result for constrained control of discrete-time stochastic systems.
It was shown there that if the control input of a strictly unstable
discrete-time stochastic system is subject to hard norm-constraints,
then the second moment of the state always diverges under nonvanishing
and unbounded additive stochastic process noise. A common approach
to overcome the difficulties in the stabilization of strictly unstable
systems is to consider probabilistic constraints instead of hard deterministic
constraints. However, it was shown in \cite{cetinkaya2020TAC} that
under certain conditions, stabilization of a discrete-time linear
stochastic system is impossible even under probabilistic constraints. 

The scope of the impossibility results provided in the abovementioned
articles covers discrete-time stochastic systems with additive noise.
In this paper, we are motivated to expand this scope by addressing
two issues. First, we want to know if similar impossibility results
can be obtained for \emph{continuous-time} stochastic systems. Secondly,
we want to investigate the effects of both \emph{additive} and \emph{multiplicative}
noise terms. Handling multiplicative noise terms is important, since
such terms can characterize parametric uncertainties in the system
(see \cite{elghaoui1995,khasminskii2011stochastic}). As our main
contribution, we identify the scenarios where stabilization of a \emph{continuous-time
stochastic system (with both additive and multiplicative noise)} is
\emph{not possible} under \emph{probabilistically-constrained control
policies}. Specifically, we consider control policies that have bounded
time-averaged second moments. This class of control policies encapsulate
many types of controllers with (probabilistic or deterministic) control
constraints. We obtain conditions on the bounding value of the control
constraint, under which the second moment of the state \emph{diverges}
regardless of the controller choice and regardless of the initial
state value. 

Our analysis for the continuous-time systems with additive and multiplicative
noise has a few key differences from that for the discrete-time case
with additive-only noise provided in \cite{chatterjee2012onmeansquare,cetinkaya2020TAC}.
First, in our case, we handle Ito-type stochastic differential equations
with state-dependent noise terms characterizing multiplicative Wiener
noise. In addition, we develop a form of reverse Gronwall's inequality
to obtain lower bounds on functions with superlinear growth. Through
our analysis, we observe that combination of additive and multiplicative
noise can make systems harder to stabilize. Even systems that have
Hurwitz-stable system matrices can be impossible to stabilize with
constrained controllers under the combination of additive and multiplicative
noise. 

We organize the rest of the paper as follows. In Section~\ref{sec:Constrained-Control-of},
we describe the constrained control problem. Then in Sections~\ref{sec:Conditions-for-Impossibility}
and \ref{sec:Special-Setting-with}, we provide our impossibility
results for constrained control of continuous-time stochastic systems.
Finally, we present numerical examples in Section~\ref{sec:Numerical-Example}
and conclude the paper in Section~\ref{sec:Conclusion}. 

\textbf{Notation:} We denote the Euclidean norm by $\|\cdot\|$,
the trace operator by $\mathrm{tr}(\cdot)$, and the maximum eigenvalue
of a Hermitian matrix $H\in\mathbb{C}^{n\times n}$ by $\lambda_{\max}(H)$.
We use $H^{\frac{1}{2}}$ to represent the unique nonnegative-definite
Hermitian square root of a nonnegative-definite Hermitian matrix $H\in\mathbb{C}^{n\times n}$,
satisfying $H^{\frac{1}{2}}H^{\frac{1}{2}}=H$ and $(H^{\frac{1}{2}})^{*}=H^{\frac{1}{2}}$.
The identity matrix in $\mathbb{R}^{n\times n}$ is denoted by $I_{n}$.
The notations $\mathrm{\mathbb{P}}[\cdot]$ and $\mathbb{E}[\cdot]$
respectively denote the probability and expectation on a probability
space $(\Omega,\mathcal{F},\mathbb{P})$ with sample space $\Omega$
and $\sigma$-algebra $\mathcal{F}$. We consider a continuous-time
filtration $\{\mathcal{F}_{t}\}_{t\geq0}$ with $\mathcal{F}_{t_{1}}\subseteq\mathcal{F}_{t_{2}}\subseteq\mathcal{F}$
for $t_{1}\leq t_{2}$. Throughout the paper $\{W(t)=[W_{1}(t),\ldots,W_{\ell}(t)]^{\mathrm{T}}\in\mathbb{R}^{\ell}\}_{t\geq0}$
denotes the Wiener process. Here, for every $i\in\{1,\ldots,\ell\}$,
the process $\{W_{i}(t)\in\mathbb{R}\}_{t\geq0}$ is $\mathcal{F}_{t}$-adapted;
$\{W_{i}(t)\in\mathbb{R}\}_{t\geq0}$, $i\in\{1,\ldots,\ell\}$, are
independent processes. Moreover, $\overline{c}$ denotes the complex
conjugate of a complex number $c\in\mathbb{C}$, and $\mathrm{Re}(c)$
denotes its real part. We use $C^{*}$ to denote the complex conjugate
transpose of a complex matrix $C\in\mathbb{C}^{n\times m}$, that
is, $(C^{*}){}_{i,j}=\overline{C_{j,i}}$, $i\in\{1,\ldots,m\}$,
$j\in\{1,\ldots,n\}$. Given a vector $v\in\mathbb{R}^{n}$, and indices
$i,j\in\{1,\ldots,n\}$, $i\leq j$, we define $v_{i:j}\in\mathbb{R}^{j-i+1}$
as $v_{i:j}\triangleq[v_{i},\ldots,v_{j}]^{\mathrm{T}}$. 

\section{Constrained Control of Continuous-Time Linear Stochastic Systems}

\label{sec:Constrained-Control-of}

Consider the continuous-time linear stochastic system described by
the Ito-type stochastic differential equation 
\begin{align}
\mathrm{d}x(t) & =(Ax(t)+Bu(t))\mathrm{d}t+\left[\Psi(x(t)),\,D\right]\mathrm{d}W(t),\label{eq:continuous-time-system}
\end{align}
for $t\geq0$, where $x(t)\in\mathbb{R}^{n}$ is the state with deterministic
initial value $x(0)=x_{0}$, $u(t)\in\mathbb{R}^{m}$ is the control
input, and moreover, $\{W(t)\in\mathbb{R}^{\ell}\}_{t\geq0}$ is the
Wiener process. 

The matrices $A\in\mathbb{R}^{n\times n}$ and $B\in\mathbb{R}^{n\times m}$
are called system and input matrices, respectively. Moreover, $\Psi(x(t))\in\mathbb{R}^{n\times\ell_{1}}$
and $D\in\mathbb{R}^{n\times\ell_{2}}$ (with $\ell_{1}+\ell_{2}=\ell$)
are noise matrices. The matrix-valued function $\Psi\colon\mathbb{R}^{n}\to\mathbb{R}^{n\times\ell_{1}}$
characterizes the effects of multiplicative noise and it is given
by
\begin{align}
\Psi(x) & =[C_{1}x,C_{2}x,\ldots,C_{\ell_{1}}x],\label{eq:Psi-C-def}
\end{align}
where $C_{i}\in\mathbb{R}^{n\times n}$, $i\in\{1,\ldots,\ell_{1}\}$.
The matrix $D\in\mathbb{R}^{n\times\ell_{2}}$ in \eqref{eq:continuous-time-system}
is used for characterizing the effects of additive noise. 

Notice that $W_{1:\ell_{1}}(\cdot)$ enters in the dynamics as multiplicative
noise and $W_{\ell_{1}+1:\ell}(\cdot)$ enters as additive noise,
since $\left[\Psi(x(t)),\,D\right]\mathrm{d}W(t)=\Psi(x(t))\mathrm{d}W_{1:\ell_{1}}(t)+D\mathrm{d}W_{\ell_{1}+1:\ell}(t)$. 

In this paper, we are interested in a stabilization problem. Since
the Wiener process $W_{\ell_{1}+1:\ell}(\cdot)$ enters in the dynamics
in an additive way, the state and its moments cannot converge to $0$
regardless of the control input, unless $D=0$. For this reason, asymptotic
stabilization is not possible and a weaker notion of stabilization
is needed. In this paper, we consider the\emph{ bounded second-moment
stabilization} notion, where the control goal is to achieve $\sup_{t\geq0}\mathbb{E}[\|x(t)\|^{2}]<\infty$. 

We consider a stochastic constraint such that the time-averaged $2$nd
moment of $u(t)$ is bounded by $\hat{u}\geq0$, i.e., 
\begin{align}
\frac{1}{t}\int_{0}^{t}\mathbb{E}[\left\Vert u(\tau)\right\Vert ^{2}]\mathrm{d}\tau & \leq\hat{u},\quad t\geq0.\label{eq:average-u-bound-1}
\end{align}
This constraint is a relaxation of other types of control constraints,
i.e., the satisfaction of \eqref{eq:average-u-bound-1} does not necessarily
imply satisfaction of other constraints. Note on the other hand that
norm-constraints (e.g., $\|u(t)\|\leq\overline{u}$ or $\|u(t)\|_{\infty}\leq\overline{u}$),
time-averaged norm constraints (e.g., $\frac{1}{t}\int_{0}^{t}\|u(\tau)\|\mathrm{d}\mathrm{\tau}\leq\overline{u}$),
as well as first- and second-moment constraints (e.g., $\mathbb{E}[\|u(t)\|]\leq\overline{u}$
or $\mathbb{E}[\|u(t)\|^{2}]\leq\overline{u}$) all satisfy \eqref{eq:average-u-bound-1}
for certain values of $\hat{u}$. 

\begin{remark}The structure of \eqref{eq:average-u-bound-1} is motivated
by the networked control problem of a plant with a remotely located
controller. In this problem, control commands $u_{\mathrm{C}}(t)$
transmitted from the controller are subject to packet losses, and
the plant sets its input $u(t)$ to $0$ if there is a packet loss
(and to $u_{\mathrm{C}}(t)$ otherwise). The actuator at the plant
side has a hard constraint requiring $\|u_{\mathrm{C}}(t)\|^{2}\leq\overline{u}_{\mathrm{C}}$
for $t\geq0$. With randomness involved in packet losses, the plant
input $u(t)$ actually satisfies \eqref{eq:average-u-bound-1} with
$\hat{u}<\overline{u}_{\mathrm{C}}$ (see Section~IV.D of \cite{cetinkaya2020TAC}
for the specific form of $\hat{u}$). Even though the actuator may
be powerful ($\overline{u}_{\mathrm{C}}$ is large), unstable noisy
plants in certain scenarios cannot be stabilized if there are very
frequent packet losses, because in such cases $\hat{u}$ is much smaller
than $\overline{u}_{\mathrm{C}}$, and the controller is unable to
provide inputs with sufficient average energy to the plant. \hfill $\triangleleft$\end{remark} 

For given $A,B,\Psi,D$, our goal is to find a threshold for $\hat{u}$,
below which stabilization of \eqref{eq:continuous-time-system} becomes
impossible and the second moment $\mathbb{E}[\|x(t)\|^{2}]$ diverges
regardless of the controller design. 

The following lemmas are used in the derivation of our main result
in Section~\ref{sec:Conditions-for-Impossibility}. The first lemma
is related to the bounding value $\hat{u}$ of the control constraint
\eqref{eq:average-u-bound-1}. The second lemma is an extension of
Gronwall's lemma (see, e.g., \cite{king2003differential}), where
the key condition involves a linear term and the result provides a
lower bound instead of an upper bound. 

\begin{lemma} \label{Lemma-HatU} Let $\hat{u}\in[0,\infty)$, $\kappa\in(0,\infty)$
be scalars that satisfy $\hat{u}<\kappa$. Then $\mathcal{Q}\triangleq\left\{ q>1\colon\hat{u}<\kappa/q\right\} $
is non-empty. \end{lemma}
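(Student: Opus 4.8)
The plan is to prove non-emptiness constructively, by exhibiting an explicit element of $\mathcal{Q}$. Since the natural candidate for such an element is expressed in terms of the quotient $\kappa/\hat{u}$, I would first split into the two cases $\hat{u}=0$ and $\hat{u}>0$, so as to avoid dividing by $\hat{u}$ before it is known to be positive.

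In the case $\hat{u}>0$, I would rewrite the hypothesis $\hat{u}<\kappa$ equivalently as $\kappa/\hat{u}>1$, and then choose $q$ to be any real number strictly between $1$ and $\kappa/\hat{u}$, for instance the midpoint $q\triangleq\frac{1}{2}\left(1+\kappa/\hat{u}\right)$. By construction $q>1$, and from $q<\kappa/\hat{u}$ together with $\hat{u}>0$ and $\kappa>0$ one obtains, after clearing denominators, $\hat{u}<\kappa/q$; hence $q\in\mathcal{Q}$ and $\mathcal{Q}\neq\emptyset$. In the remaining case $\hat{u}=0$, any $q>1$ already works, since then $\kappa/q>0=\hat{u}$; so e.g.\ $q=2\in\mathcal{Q}$.

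I do not expect any genuine obstacle here: the statement is essentially the density of $\mathbb{R}$ between $1$ and $\kappa/\hat{u}$, combined with the degenerate case $\hat{u}=0$. The only point that requires a little care is performing the case split before manipulating $\kappa/\hat{u}$, which the argument above does.
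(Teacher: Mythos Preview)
Your proof is correct and follows the same constructive strategy as the paper: exhibit an explicit element of $\mathcal{Q}$. The paper happens to pick $\tilde{q}=2\kappa/(\hat{u}+\kappa)$, which is well-defined even when $\hat{u}=0$ and so sidesteps your case split, but this is a cosmetic difference only.
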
 \begin{proof} Let $\tilde{q}\triangleq2\kappa/(\hat{u}+\kappa)$.
Since $\hat{u}<\kappa$, we have $2\kappa>\hat{u}+\kappa$, which
implies $\tilde{q}>1$. Moreover, since $\hat{u}<\kappa$, we have
$\kappa/\tilde{q}=(\hat{u}+\kappa)/2>\hat{u}$. As both $\tilde{q}>1$
and $\hat{u}<\kappa/\tilde{q}$ hold, we have $\tilde{q}\in Q$, implying
that $\mathcal{Q}\neq\emptyset$.\end{proof} 

\begin{lemma} \label{LemmaInverseGronwall} Given scalars $c_{0},c_{1}\in\mathbb{R}$
and $\phi>0$, suppose that $h\colon[0,\infty)\to\mathbb{R}$ is a
continuous function that satisfies 
\begin{align}
h(t) & \geq c_{0}+c_{1}t+\phi\int_{0}^{t}h(\tau)\mathrm{d}\tau,\quad t\geq0.\label{eq:f-ineq}
\end{align}
 Then we have 
\begin{align}
h(t) & \geq c_{0}e^{\phi t}+(c_{1}/\phi)(e^{\phi t}-1),\quad t\geq0.\label{eq:inverse-gronwall}
\end{align}
 Moreover, if \eqref{eq:f-ineq} holds with equality, then \eqref{eq:inverse-gronwall}
holds with equality. \end{lemma}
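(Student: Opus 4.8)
The plan is to turn the integral inequality \eqref{eq:f-ineq} into a first-order linear differential inequality for the antiderivative of $h$, and then run the usual integrating-factor argument; because the integrating factor is strictly positive, every step preserves the direction of the inequality, which is what lets the classical (upper-bound) Gronwall argument run ``in reverse'' to produce a lower bound. This is exactly the dual of the proof of the standard Gronwall lemma cited in the excerpt.

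Concretely, I would first set $H(t)\triangleq\int_{0}^{t}h(\tau)\mathrm{d}\tau$. Since $h$ is continuous, $H$ is continuously differentiable with $H'(t)=h(t)$ and $H(0)=0$, so \eqref{eq:f-ineq} reads $H'(t)-\phi H(t)\geq c_{0}+c_{1}t$ for all $t\geq0$. Multiplying through by the strictly positive factor $e^{-\phi t}$ (which preserves the inequality) gives $\frac{\mathrm{d}}{\mathrm{d}t}\bigl(e^{-\phi t}H(t)\bigr)\geq(c_{0}+c_{1}t)e^{-\phi t}$, and both sides are continuous in $t$, so integrating from $0$ to $t$ and using $H(0)=0$ yields $e^{-\phi t}H(t)\geq\int_{0}^{t}(c_{0}+c_{1}s)e^{-\phi s}\mathrm{d}s$. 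The right-hand integral is elementary (one integration by parts for the $c_{1}s$ term), and after multiplying back by $e^{\phi t}$ one gets the explicit lower bound $\phi H(t)\geq c_{0}(e^{\phi t}-1)+(c_{1}/\phi)(e^{\phi t}-1)-c_{1}t$. Substituting this into \eqref{eq:f-ineq} and simplifying, the stray $c_{0}$ and $c_{1}t$ terms cancel against the bound on $\phi H(t)$, leaving precisely \eqref{eq:inverse-gronwall}. For the equality claim, I would note that if \eqref{eq:f-ineq} holds with equality for every $t$, then $H'(t)-\phi H(t)=c_{0}+c_{1}t$, and each subsequent step (multiplying by $e^{-\phi t}$, integrating, substituting back) is an identity, so \eqref{eq:inverse-gronwall} holds with equality.

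I do not expect a real obstacle. The only two points needing a little care are: (i) $h$ is only assumed continuous, not differentiable, so one must argue through the $C^{1}$ function $H$ rather than differentiating $h$ itself; and (ii) one must check that passing to a lower bound is legitimate at each manipulation — it is, precisely because the integrating factor $e^{-\phi t}$ is positive, so neither multiplying by it nor integrating the resulting continuous functions ever reverses the inequality.
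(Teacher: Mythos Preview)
Your argument is correct and is essentially the same integrating-factor approach the paper uses: the paper applies the factor $e^{-\phi t}$ to $g(t)\triangleq c_{0}+c_{1}t+\phi\int_{0}^{t}h$, whereas you apply it to $H(t)\triangleq\int_{0}^{t}h$, and since $g=c_{0}+c_{1}t+\phi H$ the two computations differ only by an affine shift. The paper's choice makes the post-multiplication integral marginally simpler (one integrates $c_{1}e^{-\phi s}$ rather than $(c_{0}+c_{1}s)e^{-\phi s}$), but the structure and the handling of inequality/equality are identical.
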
 \begin{proof} Let $g(s)\triangleq c_{0}+c_{1}s+\phi\int_{0}^{s}h(\tau)\mathrm{d}\tau$
for $s\geq0$. Since $h$ is a continuous function, by fundamental
theorem of calculus, we have $\frac{\mathrm{d}g(s)}{\mathrm{d}s}=c_{1}+\phi h(s)$.
Note that \eqref{eq:f-ineq} implies $h(s)\geq g(s)$. Furthermore,
since $\phi>0$, 
\begin{align}
 & \frac{\mathrm{d}(g(s)e^{-\phi s})}{\mathrm{d}s}=\frac{\mathrm{d}g(s)}{\mathrm{d}s}e^{-\phi s}-\phi e^{-\phi s}g(s)\nonumber \\
 & \quad=c_{1}e^{-\phi s}+\phi e^{-\phi s}(h(s)-g(s))\geq c_{1}e^{-\phi s}.\label{eq:gineq}
\end{align}
By integrating both left- and far right-hand sides of the inequality
\eqref{eq:gineq} over the interval $[0,t]$, we get $g(t)e^{-\phi t}-g(0)\geq\frac{c_{1}}{\phi}(1-e^{-\phi t})$.
By using this inequality and $g(0)=c_{0}$, we obtain $g(t)\geq c_{0}e^{\phi t}+\frac{c_{1}}{\phi}(e^{\phi t}-1)$
for $t\geq0$, which implies \eqref{eq:inverse-gronwall}, since $h(t)\geq g(t)$.
Finally, if \eqref{eq:f-ineq} holds with equality, then $h(s)=g(s)$
for $s\geq0$, and thus, \eqref{eq:gineq} holds with equality, which
implies that \eqref{eq:inverse-gronwall} holds with equality. \end{proof} 

\section{Conditions for Impossibility of Stabilization}

\label{sec:Conditions-for-Impossibility}

In this section, we present our main result, which provides conditions
on the control constraint \eqref{eq:average-u-bound-1}, under which
the stochastic system \eqref{eq:continuous-time-system} is impossible
to be stabilized.

\begin{theorem} \label{TheoremContinuousTime} Consider the stochastic
system \eqref{eq:continuous-time-system}. Assume that there exist
a nonnegative-definite Hermitian matrix $R\in\mathbb{C}^{n\times n}\setminus\{0\}$
and a scalar $\phi_{\mathrm{L}}>0$ such that 
\begin{align}
A^{\mathrm{T}}R+RA+\sum_{i=1}^{\ell_{1}}C_{i}^{\mathrm{T}}RC_{i} & \geq\phi_{\mathrm{L}}R,\label{eq:A-cond-1}\\
\mathrm{tr}\left(D^{\mathrm{T}}RD\right) & >0.\label{eq:nonzero-noise-1}
\end{align}
If the control policy is $\mathcal{F}_{t}$-adapted and satisfies
\eqref{eq:average-u-bound-1} with 
\begin{align}
\hat{u} & <\begin{cases}
\phi_{\mathrm{L}}\mathrm{tr}\left(D^{\mathrm{T}}RD\right)/\beta_{\mathrm{U}}, & \mathrm{if}\,\,\beta_{\mathrm{U}}\neq0,\\
\infty, & \mathrm{otherwise},
\end{cases}\label{eq:ubar-cond-1}
\end{align}
 where $\beta_{\mathrm{U}}\triangleq\lambda_{\max}(B^{\mathrm{T}}RB)$,
then the second moment of the state diverges, that is,
\begin{align}
\lim_{t\to\infty}\mathbb{E}[\left\Vert x(t)\right\Vert ^{2}] & =\infty,\label{eq:divergence-1}
\end{align}
for any initial state $x_{0}\in\mathbb{R}^{n}$.\end{theorem}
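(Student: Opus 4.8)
The plan is to track the evolution of the scalar quantity $V(t)\triangleq\mathbb{E}[x(t)^{\mathrm{T}}Rx(t)]$ and show it satisfies an integral inequality of the form required by Lemma~\ref{LemmaInverseGronwall}, forcing exponential divergence; divergence of $\mathbb{E}[\|x(t)\|^2]$ then follows because $R$ is bounded above by $\lambda_{\max}(R)I_n$. First I would apply Ito's formula to $x(t)^{\mathrm{T}}Rx(t)$ along the solution of \eqref{eq:continuous-time-system}. The drift term produces $x^{\mathrm{T}}(A^{\mathrm{T}}R+RA)x + 2x^{\mathrm{T}}RBu$, and the second-order (trace) term from the noise produces $\sum_{i=1}^{\ell_1} x^{\mathrm{T}}C_i^{\mathrm{T}}RC_i x + \mathrm{tr}(D^{\mathrm{T}}RD)$. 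Taking expectations (after arguing the martingale part has zero mean, or more carefully passing through a localizing stopping-time sequence and a monotone/Fatou argument to handle integrability) and integrating from $0$ to $t$ gives
\begin{align}
V(t) &= V(0) + \int_0^t \mathbb{E}\!\left[x(\tau)^{\mathrm{T}}\Big(A^{\mathrm{T}}R+RA+\textstyle\sum_{i=1}^{\ell_1}C_i^{\mathrm{T}}RC_i\Big)x(\tau)\right]\mathrm{d}\tau \nonumber \\
&\quad + 2\int_0^t \mathbb{E}[x(\tau)^{\mathrm{T}}RBu(\tau)]\,\mathrm{d}\tau + t\,\mathrm{tr}(D^{\mathrm{T}}RD). \nonumber
\end{align}

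Next I would lower-bound the right-hand side. By the curvature assumption \eqref{eq:A-cond-1}, the first integrand is at least $\phi_{\mathrm{L}}\mathbb{E}[x(\tau)^{\mathrm{T}}Rx(\tau)] = \phi_{\mathrm{L}}V(\tau)$. The cross term is bounded below using Young's inequality: for any $\varepsilon>0$, $2x^{\mathrm{T}}RBu \geq -\varepsilon\, x^{\mathrm{T}}Rx - \varepsilon^{-1} u^{\mathrm{T}}B^{\mathrm{T}}RB u \geq -\varepsilon\, x^{\mathrm{T}}Rx - \varepsilon^{-1}\beta_{\mathrm{U}}\|u\|^2$ (the last step since $B^{\mathrm{T}}RB \leq \beta_{\mathrm{U}} I_m$; when $\beta_{\mathrm{U}}=0$ the cross term is identically zero and the argument simplifies). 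Actually, to use the control constraint \eqref{eq:average-u-bound-1} cleanly I would instead absorb the $x$-part of the cross term into the $\phi_{\mathrm{L}}$ budget: pick $q>1$ from the set $\mathcal{Q}$ of Lemma~\ref{Lemma-HatU} (with $\kappa \triangleq \phi_{\mathrm{L}}\mathrm{tr}(D^{\mathrm{T}}RD)/\beta_{\mathrm{U}}$, nonempty precisely because of hypothesis \eqref{eq:ubar-cond-1}), choose $\varepsilon$ so that $\phi_{\mathrm{L}} - \varepsilon =: \phi > 0$ stays strictly positive while $\varepsilon^{-1}\beta_{\mathrm{U}}$ times the time-averaged control energy is controlled; the constraint gives $2\int_0^t \mathbb{E}[x^{\mathrm{T}}RBu]\mathrm{d}\tau \geq -\varepsilon\int_0^t V(\tau)\mathrm{d}\tau - \varepsilon^{-1}\beta_{\mathrm{U}}\hat{u}\,t$. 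Combining, $V(t) \geq V(0) + c_1 t + \phi\int_0^t V(\tau)\mathrm{d}\tau$ with $c_1 = \mathrm{tr}(D^{\mathrm{T}}RD) - \varepsilon^{-1}\beta_{\mathrm{U}}\hat{u}$, and the $q$ from $\mathcal{Q}$ together with a matching choice $\varepsilon = \phi_{\mathrm{L}}(1-1/q)$ (or similar) makes $c_1 > 0$.

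Then Lemma~\ref{LemmaInverseGronwall} applies with $c_0 = V(0) = x_0^{\mathrm{T}}Rx_0 \geq 0$ and this $c_1>0$, yielding $V(t) \geq (c_1/\phi)(e^{\phi t}-1) \to \infty$. Since $x^{\mathrm{T}}Rx \leq \lambda_{\max}(R)\|x\|^2$, we get $\mathbb{E}[\|x(t)\|^2] \geq V(t)/\lambda_{\max}(R) \to \infty$, which is \eqref{eq:divergence-1}, uniformly over $x_0$. The main obstacle I anticipate is the justification of the expectation-of-Ito step: the solution of a multiplicative-noise SDE need not a priori have finite second moment for all $t$ (indeed, a priori it might even blow up in finite time), so I would need to work with stopped processes $x(t\wedge\tau_k)$ for a localizing sequence $\tau_k\uparrow\infty$, derive the integral inequality for the stopped second moments, and then pass to the limit — using Fatou's lemma on the left and monotone/dominated convergence on the right — while being careful that the cross-term bound and the control constraint survive the localization. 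A secondary subtlety is ensuring $\mathbb{E}[x(\tau)^{\mathrm{T}}RBu(\tau)]$ is well-defined and that Young's inequality can be applied under the expectation, which follows from Cauchy–Schwarz together with the finiteness of the time-averaged control energy and the (stopped) state second moment.
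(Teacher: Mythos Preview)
Your proposal is correct and follows essentially the same argument as the paper: apply Ito's formula to $x^{\mathrm T}Rx$, bound the cross term via Young's inequality (the paper obtains the identical bound by expanding $0\le(\gamma^{-1/2}x+\gamma^{1/2}Bu)^{\mathrm T}R(\gamma^{-1/2}x+\gamma^{1/2}Bu)$, which is your inequality with $\varepsilon=\gamma^{-1}$), invoke the constraint \eqref{eq:average-u-bound-1}, and feed the result into Lemma~\ref{LemmaInverseGronwall}. The only slip is the specific parameter choice: $\varepsilon=\phi_{\mathrm L}(1-1/q)$ does not force $c_1>0$ when $q<2$; the paper instead takes $\varepsilon=\phi_{\mathrm L}/q$ (equivalently $\gamma=q/\phi_{\mathrm L}$), so that $\phi=\phi_{\mathrm L}(1-1/q)>0$ and $c_1=\mathrm{tr}(D^{\mathrm T}RD)-(q/\phi_{\mathrm L})\beta_{\mathrm U}\hat u>0$ follow directly from $q>1$ and $\hat u<\kappa/q$.
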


\begin{proof}Let $V(x)\triangleq x^{\mathrm{T}}Rx$. As a first step,
we will show $\lim_{t\to\infty}\mathbb{E}\left[V(x(t))\right]=\infty.$
Let $\Lambda\triangleq A^{\mathrm{T}}R+RA+\sum_{i=1}^{\ell_{1}}C_{i}^{\mathrm{T}}RC_{i}$.
It follows from Ito formula (see Section~4.2 of \cite{oksendal2005stochastic})
that 
\begin{align}
\mathrm{d}V(x(t)) & =\mathrm{tr}(D^{\mathrm{T}}RD)\mathrm{d}t+x^{\mathrm{T}}(t)\Lambda x(t)\mathrm{d}t\nonumber \\
 & \,\,\quad+(x^{\mathrm{T}}(t)RBu(t)+u^{\mathrm{T}}(t)B^{\mathrm{T}}Rx(t))\mathrm{d}t\nonumber \\
 & \,\,\quad+\sum_{i=1}^{\ell_{1}}x^{\mathrm{T}}(t)(C_{i}^{\mathrm{T}}R+RC_{i})x(t)\mathrm{d}W_{i}(t)\nonumber \\
 & \,\,\quad+\sum_{i=1}^{\ell_{2}}(x^{\mathrm{T}}(t)Rd_{i}+d_{i}^{\mathrm{T}}Rx(t))\mathrm{d}W_{i+\ell_{1}}(t),\label{eq:ito-lemma}
\end{align}
 where $d_{i}\in\mathbb{R}^{n}$, $i\in\{1,\ldots,\ell_{2}\}$, denote
the columns of matrix $D$. Under an $\mathcal{F}_{t}$-adapted control
policy, $\{x(t)\}_{t\geq0}$ is $\mathcal{F}_{t}$-adapted. Thus,
by Theorem 3.2.1 of \cite{oksendal2005stochastic}, we have $\mathbb{E}[\int_{0}^{t}x^{\mathrm{T}}(\tau)(C_{i}^{\mathrm{T}}R+RC_{i})x(\tau)\mathrm{d}W_{i}(\tau)]=0$
and $\mathbb{E}[\int_{0}^{t}(x^{\mathrm{T}}(\tau)Rd_{i}+d_{i}^{\mathrm{T}}Rx(\tau))\mathrm{d}W_{i+\ell_{1}}(\tau)]=0$.
As a result, it follows from \eqref{eq:ito-lemma} that 
\begin{align}
 & \mathbb{E}[V(x(t))]\nonumber \\
 & \,=\mathbb{E}[V(x(0))]+\mathrm{tr}(D^{\mathrm{T}}RD)t+\mathbb{E}\left[\int_{0}^{t}x^{\mathrm{T}}(\tau)\Lambda x(\tau)\mathrm{d}\tau\right]\nonumber \\
 & \,\quad+\mathbb{E}\left[\int_{0}^{t}(x^{\mathrm{T}}(\tau)RBu(\tau)+u^{\mathrm{T}}(\tau)B^{\mathrm{T}}Rx(\tau))\mathrm{d}\tau\right],\label{eq:pre-ev}
\end{align}
 for $t\geq0$. Next, we change the order of expectation and integration
in \eqref{eq:pre-ev} by using Fubini's theorem \cite{billingsley1986}
and obtain
\begin{align}
 & \mathbb{E}[V(x(t))]\nonumber \\
 & \,=\mathbb{E}[V(x(0))]+\mathrm{tr}(D^{\mathrm{T}}RD)t+\int_{0}^{t}\mathbb{E}[x^{\mathrm{T}}(\tau)\Lambda x(\tau)]\mathrm{d}\tau\nonumber \\
 & \,\quad+\int_{0}^{t}\mathbb{E}[x^{\mathrm{T}}(\tau)RBu(\tau)+u^{\mathrm{T}}(\tau)B^{\mathrm{T}}Rx(\tau)]\mathrm{d}\tau.\label{eq:ev}
\end{align}
In what follows, we use \eqref{eq:ev} to show $\lim_{t\to\infty}\mathbb{E}\left[V(x(t))\right]=\infty$,
separately for two cases: $\beta_{\mathrm{U}}=0$ and $\beta_{\mathrm{U}}>0$.

First, consider the case where $\beta_{\mathrm{U}}=\lambda_{\max}(B^{\mathrm{T}}RB)=0$.
In this case, we have $R^{\frac{1}{2}}B=0$, and hence $RB=0$. Furthermore,
\eqref{eq:A-cond-1} implies $\mathbb{E}[x^{\mathrm{T}}(\tau)\Lambda x(\tau)]\geq\phi_{\mathrm{L}}\mathbb{E}[x^{\mathrm{T}}(\tau)Rx(\tau)]=\phi_{\mathrm{L}}\mathbb{E}[V(x(\tau))]$.
As a consequence, we obtain from \eqref{eq:ev} that $\mathbb{E}[V(x(t))]\geq\mathbb{E}[V(x(0))]+\mathrm{tr}(D^{\mathrm{T}}RD)t+\phi_{\mathrm{L}}\int_{0}^{t}\mathbb{E}[V(x(\tau))]\mathrm{d}\tau$
for all $t\geq0$. Therefore, we can use Lemma~\ref{LemmaInverseGronwall}
with $c_{0}=\mathbb{E}[V(x(0))]$, $c_{1}=\mathrm{tr}(D^{\mathrm{T}}RD)$,
$\phi=\phi_{\mathrm{L}}$, and $h(t)=\mathbb{E}[V(x(t))]$ to obtain
\begin{align}
\mathbb{E}[V(x(t))] & \geq\mathbb{E}[V(x(0))]e^{\phi_{\mathrm{L}}t}\nonumber \\
 & \quad+(\mathrm{tr}(D^{\mathrm{T}}RD)/\phi_{\mathrm{L}})(e^{\phi_{\mathrm{L}}t}-1).\label{eq:ev-ineq}
\end{align}
Notice that $\phi_{\mathrm{L}}$ is positive, and hence, $\lim_{t\to\infty}e^{\phi_{\mathrm{L}}t}=\infty$.
Moreover, $\mathrm{tr}(D^{\mathrm{T}}RD)$ is positive by the assumption
\eqref{eq:nonzero-noise-1}. As a result, \eqref{eq:ev-ineq} implies
$\lim_{t\to\infty}\mathbb{E}\left[V(x(t))\right]=\infty$. 

Next, we will show that $\lim_{t\to\infty}\mathbb{E}\left[V(x(t))\right]=\infty$
holds also for the case where $\beta_{\mathrm{U}}>0$. For this case
let $\kappa\triangleq\phi_{\mathrm{L}}\mathrm{tr}\left(D^{\mathrm{T}}RD\right)/\beta_{\mathrm{U}}$
and $\mathcal{Q}\triangleq\left\{ q>1\colon\hat{u}<\kappa/q\right\} $.
By Lemma~\ref{Lemma-HatU}, we have $\mathcal{Q}\neq\emptyset$. 

Now let $\hat{q}\in\mathcal{Q}$ and define $\gamma(\hat{q})\triangleq\hat{q}/\phi_{\mathrm{L}}$.
The scalars $\gamma^{1/2}(\hat{q})$ and $\gamma^{-1/2}(\hat{q})$
are well-defined since $\gamma(\hat{q})>0$. Moreover, since $R$
is a nonnegative-definite Hermitian matrix, we have $0\leq z^{\mathrm{T}}Rz$
for any $z\in\mathbb{R}^{n}$. Using this inequality with $z=\gamma^{-1/2}(q)x(\tau)+\gamma^{1/2}(q)Bu(\tau)$,
we get 
\begin{align*}
0 & \leq\left(\gamma^{-1/2}(q)x(\tau)+\gamma^{1/2}(q)Bu(\tau)\right)^{\mathrm{T}}R\\
 & \quad\cdot\left(\gamma^{-1/2}(q)x(\tau)+\gamma^{1/2}(q)Bu(\tau)\right)\\
 & =\gamma^{-1}(\hat{q})x^{\mathrm{T}}(\tau)Rx(\tau)+x^{\mathrm{T}}(\tau)RBu(\tau)\\
 & \quad+u^{\mathrm{T}}(\tau)B^{\mathrm{T}}Rx(\tau)+\gamma(\hat{q})u_{1}^{\mathrm{T}}(\tau)B^{\mathrm{T}}RBu(\tau),
\end{align*}
which implies
\begin{align}
 & x^{\mathrm{T}}(\tau)RBu(\tau)+u^{\mathrm{T}}(\tau)B^{\mathrm{T}}Rx(\tau)\nonumber \\
 & \,\geq-\gamma^{-1}(\hat{q})x^{\mathrm{T}}(\tau)Rx(\tau)-\gamma(\hat{q})u^{\mathrm{T}}(\tau)B^{\mathrm{T}}RBu(\tau).\label{eq:gamma-ineq}
\end{align}
  It then follows from \eqref{eq:ev} together with $\mathbb{E}[x^{\mathrm{T}}(\tau)\Lambda x(\tau)]\geq\phi_{\mathrm{L}}\mathbb{E}[V(x(\tau))]$
and \eqref{eq:gamma-ineq} that 
\begin{align}
\mathbb{E}[V(x(t))] & \geq\mathbb{E}[V(x(0))]+\mathrm{tr}(D^{\mathrm{T}}RD)t\nonumber \\
 & \quad+(\phi_{\mathrm{L}}-\gamma^{-1}(\hat{q}))\int_{0}^{t}\mathbb{E}[V(x(\tau))]\mathrm{d}\tau\nonumber \\
 & \quad-\gamma(\hat{q})\int_{0}^{t}\mathbb{E}[u^{\mathrm{T}}(\tau)B^{\mathrm{T}}RBu(\tau)]\mathrm{d}\tau.\label{eq:ev-phi}
\end{align}
Since $\gamma(\hat{q})>0$, we have $-\gamma(\hat{q})<0$. Thus, by
using $u^{\mathrm{T}}(\tau)B^{\mathrm{T}}RBu(\tau)\leq\lambda_{\max}(B^{\mathrm{T}}RB)\|u(\tau)\|^{2}=\beta_{\mathrm{U}}\|u(\tau)\|^{2}$
with \eqref{eq:ev-phi}, we obtain
\begin{align}
\mathbb{E}[V(x(t))] & \geq\mathbb{E}[V(x(0))]+\mathrm{tr}(D^{\mathrm{T}}RD)t\nonumber \\
 & \quad+(\phi_{\mathrm{L}}-\gamma^{-1}(\hat{q}))\int_{0}^{t}\mathbb{E}[V(x(\tau))]\mathrm{d}\tau\nonumber \\
 & \quad-\gamma(\hat{q})\beta_{\mathrm{U}}\int_{0}^{t}\mathbb{E}[\|u(\tau)\|^{2}]\mathrm{d}\tau.\label{eq:ev-phi-next}
\end{align}
 Now, since the control policy satisfies \eqref{eq:average-u-bound-1},
we have $\int_{0}^{t}\mathbb{E}[\|u(\tau)\|^{2}]\mathrm{d}\tau\leq\hat{u}t$,
and hence, it follows from \eqref{eq:ev-phi-next} that 
\begin{align}
\mathbb{E}[V(x(t))] & \geq\mathbb{E}[V(x(0))]+\mathrm{tr}(D^{\mathrm{T}}RD)t\nonumber \\
 & \quad+(\phi_{\mathrm{L}}-\gamma^{-1}(\hat{q}))\int_{0}^{t}\mathbb{E}[V(x(\tau))]\mathrm{d}\tau\nonumber \\
 & \quad-\gamma(\hat{q})\beta_{\mathrm{U}}\hat{u}t,\quad t\geq0.\label{eq:ev-phi-final}
\end{align}
 Let $c_{0}\triangleq\mathbb{E}[V(x(0))]$, $c_{1}\triangleq\mathrm{tr}(D^{\mathrm{T}}RD)-\gamma(\hat{q})\beta_{\mathrm{U}}\hat{u}$,
$\phi\triangleq\phi_{\mathrm{L}}-\gamma^{-1}(\hat{q})$, and $h(t)\triangleq\mathbb{E}[V(x(t))]$.
By definition of $\mathcal{Q}$, we have $\hat{q}>1$, which implies
$1-1/\hat{q}>0$. This inequality and $\phi_{\mathrm{L}}>0$ imply
$\phi=\phi_{\mathrm{L}}-\gamma^{-1}(\hat{q})=\phi_{\mathrm{L}}(1-1/\hat{q})>0$.
Thus, by Lemma~\ref{LemmaInverseGronwall}, we obtain
\begin{align}
\mathbb{E}[V(x(t))] & \geq c_{0}e^{\phi t}+(c_{1}/\phi)(e^{\phi t}-1).\label{eq:V-ineq}
\end{align}
 Since $V$ is a nonnegative-definite function, we have $c_{0}\geq0$.
Next, we show $c_{1}>0$. By definition of $\mathcal{Q}$, we have
$\hat{u}<\phi_{\mathrm{L}}\mathrm{tr}\left(D^{\mathrm{T}}RD\right)/(\beta_{\mathrm{U}}\hat{q})$.
Noting that $\gamma(\hat{q})=\hat{q}/\phi_{\mathrm{L}}$, this inequality
implies $\gamma(\hat{q})\beta_{\mathrm{U}}\hat{u}<\mathrm{tr}\left(D^{\mathrm{T}}RD\right)$.
Thus, $c_{1}=\mathrm{tr}(D^{\mathrm{T}}RD)-\gamma(\hat{q})\beta_{\mathrm{U}}\hat{u}>0$.
Now, since $c_{0}\geq0$, $c_{1}>0$, and $\phi>0$ hold, \eqref{eq:V-ineq}
implies $\lim_{t\to\infty}\mathbb{E}\left[V(x(t))\right]=\infty$. 

Finally, since $R\in\mathbb{C}^{n\times n}\setminus\{0\}$ (i.e.,
$R\neq0$), the nonnegative-definite Hermitian matrix $R$ has at
least one eigenvalue strictly larger than $0$. Thus, $\lambda_{\max}(R)>0$.
 Consequently, $V(x(t))\leq\lambda_{\max}(R)\|x(t)\|^{2}$ implies
$\|x(t)\|^{2}\geq\left(1/\lambda_{\max}(R)\right)V(x(t))$ for $t\geq0$.
Hence, \eqref{eq:divergence-1} follows from $\lim_{t\to\infty}\mathbb{E}\left[V(x(t))\right]=\infty$.
\end{proof} 

Theorem~\ref{TheoremContinuousTime} provides sufficient conditions
under which the system \eqref{eq:continuous-time-system} is instabilizable
and the second moment of the state diverges regardless of the controller
design and the initial state value. Condition \eqref{eq:A-cond-1}
in Theorem~\ref{TheoremContinuousTime} quantifies the instability
of the uncontrolled ($u(t)\equiv0$) system, and the term $\mathrm{tr}\left(D^{\mathrm{T}}RD\right)$
in \eqref{eq:nonzero-noise-1} represents the effect of additive noise
characterized with the matrix $D$. If there is no multiplicative
noise (i.e., $C_{i}=0$ for $i\in\{1,\ldots,\ell_{1}\}$), then \eqref{eq:A-cond-1}
requires $A$ to be strictly unstable. On the other hand, when there
is multiplicative noise, \eqref{eq:A-cond-1} may hold even if $A$
is Hurwitz-stable. Notice also that $R$ is a nonnegative-definite
matrix and it may have $0$ as an eigenvalue. This property is essential
in our analysis, since it allows us to deal with the cases where some
of the states are diverging, while the others are stable. 

Theorem~\ref{TheoremContinuousTime} implies that if conditions \eqref{eq:A-cond-1},
\eqref{eq:nonzero-noise-1} hold, then it is \emph{not possible} to
stabilize the system by using control inputs with too small average
second moments as in \eqref{eq:average-u-bound-1}. The impossibility
threshold on the average second moment of control input $u(t)$ is
characterized in \eqref{eq:ubar-cond-1}. If $\lambda_{\max}(B^{\mathrm{T}}RB)=0$,
then this threshold value becomes infinity indicating that stabilization
is impossible regardless of the input constraint. We note that the
case $\lambda_{\max}(B^{\mathrm{T}}RB)=0$ represents the situation,
where $u(t)$ does not have any effect on $x^{\mathrm{T}}(t)Rx(t)$. 

\begin{remark}[Instability conditions]The structure of condition
\eqref{eq:A-cond-1} is similar to those of stability/instability
conditions provided in \cite{mao2007stochastic} for stochastic systems
with multiplicative noise. In particular, when specialized to linear
systems, Corollary~4.7 of \cite{mao2007stochastic} yields an instability
condition based on existence of a positive-definite matrix $P\in\mathbb{R}^{n\times n}$
and a scalar $\psi>0$ such that $A^{\mathrm{T}}P+PA+\sum_{i=1}^{\ell_{1}}C_{i}^{\mathrm{T}}PC_{i}\geq\psi P$.
Notice that for systems with only multiplicative noise, a positive-definite
matrix $P$ is required to show global instability. In our setting,
a nonnegative-definite matrix $R$ is sufficient, because there is
also additive noise and \eqref{eq:nonzero-noise-1} guarantees that
this noise can make the projection of the state on unstable modes
of the uncontrolled system take a nonzero value even if the initial
state is zero. Moreover, under the condition \eqref{eq:ubar-cond-1},
$\mathbb{E}[x^{\mathrm{T}}(t)Rx(t)]$ diverges, which in turn implies
divergence of the second moment of the state, as shown in the proof
of Theorem~\ref{TheoremContinuousTime}. \hfill $\triangleleft$
\end{remark} 

\begin{remark}[Numerical approach] \label{RemarkNumerical} We note
that linear matrix inequalities can be used for checking the conditions
of Theorem~\ref{TheoremContinuousTime}. First of all, for a given
$\phi_{\mathrm{L}}$, condition \eqref{eq:A-cond-1} is linear in
$R$. Similarly, \eqref{eq:nonzero-noise-1} is a linear inequality
of $R$. Note that \eqref{eq:nonzero-noise-1} also guarantees that
$R\neq0$. Moreover, the inequality \eqref{eq:ubar-cond-1} can be
transformed into $\overline{\beta}\hat{u}<\phi_{\mathrm{L}}\mathrm{tr}\left(D^{\mathrm{T}}RD\right)$
and $B^{\mathrm{T}}RB\leq\overline{\beta}I_{m},$ which are linear
in $R$ and $\overline{\beta}\geq0$, for a given $\phi_{\mathrm{L}}$.
If the abovementioned inequalities are satisfied with $R=\widetilde{R}$,
then $R=c\widetilde{R}$ with any $c>0$ also satisfies them. To restrict
the solutions, we can impose an additional constraint $\mathrm{tr}(R)=1$.
In our numerical method, we iterate over a set of candidate values
of $\phi_{\mathrm{L}}$ and utilize linear matrix inequality solvers
(for each value of $\phi_{\mathrm{L}}$) to check the conditions of
Theorem~\ref{TheoremContinuousTime}. \hfill $\triangleleft$\end{remark}

\begin{remark}[Partial constraints] Theorem~\ref{TheoremContinuousTime}
can be extended to handle partial input constraints. Consider  $\mathrm{d}x(t)=(Ax(t)+Bu(t)+F\nu(t))\mathrm{d}t+\left[\Psi(x(t)),D\right]\mathrm{d}W(t)$,
where $u(t)$ is constrained as in \eqref{eq:average-u-bound-1} and
$\nu(t)$ is \emph{unconstrained}. If \eqref{eq:A-cond-1}--\eqref{eq:ubar-cond-1}
and $RF=0$ hold, then it is impossible to achieve stabilization of
this modified system. The proof is similar to that of Theorem~\ref{TheoremContinuousTime},
as $RF=0$ implies that $V(x(t))=x^{\mathrm{T}}(t)Rx(t)$ is not affected
by $\nu(\cdot)$, and hence \eqref{eq:ito-lemma} holds. \hfill $\triangleleft$\end{remark} 

\subsection{Tightness of the result for scalar systems}

Theorem~\ref{TheoremContinuousTime} provides a tight bound for $\hat{u}$
in \eqref{eq:ubar-cond-1} for scalar systems with a scalar state
and a scalar constrained input. 

Consider \eqref{eq:continuous-time-system} with scalars $A,B,D$
and scalar-valued function $\Psi(x)\triangleq C_{1}x$ such that $2A+C_{1}^{2}>0$,
$B\neq0$, and $D\neq0$. In this case, conditions \eqref{eq:A-cond-1}
and \eqref{eq:nonzero-noise-1} hold with $R=1$ and $\phi_{\mathrm{L}}=2A+C_{1}^{2}$.
Thus, Theorem~\ref{TheoremContinuousTime} implies that if the control
policy satisfies \eqref{eq:average-u-bound-1} with $\hat{u}<(2A+C_{1}^{2})D^{2}/B^{2}$,
then the system is impossible to stabilize regardless of the initial
state $x_{0}$. This bound is tight, because, as shown in the following
result, stability can be achieved when $\hat{u}=(2A+C_{1}^{2})D^{2}/B^{2}$. 

\begin{proposition} \label{PropositionScalar} Consider \eqref{eq:continuous-time-system}
with scalars $A,B,D\in\mathbb{R}$ and scalar-valued function $\Psi(x)\triangleq C_{1}x$.
Suppose $2A+C_{1}^{2}>0$, $B\neq0$, $D\neq0$, and $x_{0}=0$. Then
feedback control policy $u(t)=Kx(t)$ with $K=-(2A+C_{1}^{2})/B$
can achieve stabilization (i.e., $\sup_{t\geq0}\mathbb{E}[x^{2}(t)]<\infty$)
and satisfies \eqref{eq:average-u-bound-1} with $\hat{u}=(2A+C_{1}^{2})D^{2}/B^{2}$.
\end{proposition}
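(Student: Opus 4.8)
The plan is to substitute the proposed feedback law into the closed-loop dynamics and compute directly the evolution of $m(t)\triangleq\mathbb{E}[x^{2}(t)]$. With $u(t)=Kx(t)$ and $K=-(2A+C_{1}^{2})/B$, we have $A+BK=-(A+C_{1}^{2})$, so \eqref{eq:continuous-time-system} becomes $\mathrm{d}x(t)=-(A+C_{1}^{2})x(t)\,\mathrm{d}t+C_{1}x(t)\,\mathrm{d}W_{1}(t)+D\,\mathrm{d}W_{2}(t)$. Applying the Ito formula to $V(x)=x^{2}$ exactly as in the proof of Theorem~\ref{TheoremContinuousTime} (specialized to $R=1$, $\ell_{1}=\ell_{2}=1$, with the $2xRBu$ term merged into the drift via $A+BK$) and taking expectations so that the two Ito-integral terms vanish, I expect to obtain the linear ordinary differential equation $\dot{m}(t)=-(2A+C_{1}^{2})m(t)+D^{2}$ with $m(0)=x_{0}^{2}=0$. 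Writing $\phi_{\mathrm{L}}\triangleq2A+C_{1}^{2}>0$, the unique solution is $m(t)=(D^{2}/\phi_{\mathrm{L}})(1-e^{-\phi_{\mathrm{L}}t})$, which satisfies $0\leq m(t)\leq D^{2}/\phi_{\mathrm{L}}$ for all $t\geq0$; hence $\sup_{t\geq0}\mathbb{E}[x^{2}(t)]\leq D^{2}/\phi_{\mathrm{L}}<\infty$, which is the claimed stabilization.

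For the control constraint, since $u(\tau)=Kx(\tau)$ we have $\mathbb{E}[\|u(\tau)\|^{2}]=K^{2}m(\tau)$, so
\[
\frac{1}{t}\int_{0}^{t}\mathbb{E}[\|u(\tau)\|^{2}]\,\mathrm{d}\tau=\frac{K^{2}D^{2}}{\phi_{\mathrm{L}}t}\int_{0}^{t}\bigl(1-e^{-\phi_{\mathrm{L}}\tau}\bigr)\,\mathrm{d}\tau=\frac{K^{2}D^{2}}{\phi_{\mathrm{L}}}\Bigl(1-\frac{1-e^{-\phi_{\mathrm{L}}t}}{\phi_{\mathrm{L}}t}\Bigr)\leq\frac{K^{2}D^{2}}{\phi_{\mathrm{L}}}.
\]
Since $K^{2}=(2A+C_{1}^{2})^{2}/B^{2}=\phi_{\mathrm{L}}^{2}/B^{2}$, the last quantity equals $\phi_{\mathrm{L}}D^{2}/B^{2}=(2A+C_{1}^{2})D^{2}/B^{2}$, so \eqref{eq:average-u-bound-1} holds with $\hat{u}=(2A+C_{1}^{2})D^{2}/B^{2}$, as asserted.

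The one point that requires care, and the main obstacle, is the rigorous passage from the Ito expansion of $x^{2}$ to the ODE for $m(t)$: before taking expectations one must know that $x(t)$ has finite second moment on compact intervals and that $\int_{0}^{t}C_{1}x^{2}\,\mathrm{d}W_{1}$ and $\int_{0}^{t}Dx\,\mathrm{d}W_{2}$ are zero-mean martingales. Since both conclusions only use the \emph{upper} bound $m(t)\leq D^{2}/\phi_{\mathrm{L}}$, I would establish it by localization: with $\tau_{n}\triangleq\inf\{t:|x(t)|\geq n\}$, applying the Ito formula to $e^{\phi_{\mathrm{L}}s}x^{2}(s)$ on $[0,t\wedge\tau_{n}]$ (where the martingale terms have zero expectation) gives $\mathbb{E}[e^{\phi_{\mathrm{L}}(t\wedge\tau_{n})}x^{2}(t\wedge\tau_{n})]=x_{0}^{2}+D^{2}\mathbb{E}\bigl[\int_{0}^{t\wedge\tau_{n}}e^{\phi_{\mathrm{L}}s}\,\mathrm{d}s\bigr]$; letting $n\to\infty$ (the linear SDE has no finite-time explosion, so $\tau_{n}\to\infty$ almost surely) and using Fatou's lemma on the left with monotone convergence on the right yields $e^{\phi_{\mathrm{L}}t}m(t)\leq x_{0}^{2}+(D^{2}/\phi_{\mathrm{L}})(e^{\phi_{\mathrm{L}}t}-1)$, i.e.\ $m(t)\leq(D^{2}/\phi_{\mathrm{L}})(1-e^{-\phi_{\mathrm{L}}t})$ since $x_{0}=0$. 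Alternatively, one may simply invoke the standard a priori moment estimates for linear SDEs with constant coefficients (e.g.\ \cite{oksendal2005stochastic,mao2007stochastic,khasminskii2011stochastic}) to justify the ODE directly. Beyond this verification, the argument is elementary.
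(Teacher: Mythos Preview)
Your proof is correct and follows essentially the same approach as the paper: apply the Ito formula to (a variant of) $x^{2}$, take expectations, and solve the resulting linear relation for $m(t)=\mathbb{E}[x^{2}(t)]$. The only cosmetic difference is that the paper bakes in the integrating factor by applying Ito directly to $\phi(x,t)=e^{-\Lambda t}x^{2}$ with $\Lambda=2(A+BK)+C_{1}^{2}=-(2A+C_{1}^{2})$, which yields $\mathbb{E}[x^{2}(t)]$ without explicitly writing down and solving the ODE; and for the constraint the paper simply uses the pointwise bound $\mathbb{E}[u^{2}(t)]=K^{2}m(t)\leq(2A+C_{1}^{2})D^{2}/B^{2}$ rather than your time-average computation. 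Your added localization argument for the martingale property is a welcome bit of rigor that the paper omits.
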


\begin{proof}Let $\Lambda\triangleq2(A+BK)+C_{1}^{2}$ and $\phi(x,t)\triangleq e^{-\Lambda t}x^{2}$.
By Ito formula (Section~4.2 of \cite{oksendal2005stochastic}), 
\begin{align}
\mathrm{d}\phi(x(t),t) & =2e^{-\Lambda t}C_{1}x^{2}(t)\mathrm{d}W_{1}(t)+2e^{-\Lambda t}Dx(t)\mathrm{d}W_{2}(t)\nonumber \\
 & \quad+e^{-\Lambda t}D^{2}\mathrm{d}t.\label{eq:dphi}
\end{align}
Now, since $\{x(t)\}_{t\geq0}$ is an $\mathcal{F}_{t}$-adapted process,
we obtain $\mathbb{E}[\int_{0}^{t}2e^{-\Lambda\tau}C_{1}x^{2}(\tau)\mathrm{d}W_{1}(\tau)]=0$
and $\mathbb{E}[\int_{0}^{t}2e^{-\Lambda\tau}Dx(\tau)\mathrm{d}W_{2}(\tau)]=0$,
by using Theorem~3.2.1 of \cite{oksendal2005stochastic}. Thus, with
$x_{0}=0$, \eqref{eq:dphi} implies $\mathbb{E}[\phi(x(t),t)]=\mathbb{E}[\phi(x(0),0)]+\int_{0}^{t}e^{-\Lambda\tau}D^{2}\mathrm{d}\tau=-(D^{2}/\Lambda)(e^{-\Lambda t}-1)$.
Therefore, for all $t\geq0$, we have $\mathbb{E}[x^{2}(t)]=e^{\Lambda t}\mathbb{E}[\phi(x(t),t)]=-(D^{2}/\Lambda)(1-e^{\Lambda t})\leq-D^{2}/\Lambda$,
where the last inequality follows from $\Lambda=-(2A+C_{1}^{2})<0$.
As a consequence,  $\sup_{t\geq0}\mathbb{E}[x^{2}(t)]\leq-D^{2}/\Lambda<\infty$,
which implies that stability is achieved. Moreover, we have $\mathbb{E}[u^{2}(t)]=K^{2}\mathbb{E}[x^{2}(t)]\leq(2A+C_{1}^{2})D^{2}/B^{2}$
for all $t\geq0$, showing that control input constraint \eqref{eq:average-u-bound-1}
is satisfied with $\hat{u}=(2A+C_{1}^{2})D^{2}/B^{2}$. \end{proof}

Proposition~\ref{PropositionScalar} handles the case where $2A+C_{1}^{2}>0$.
With similar analysis, we can also show that if $2A+C_{1}^{2}<0$,
then $\mathbb{E}[x^{2}(t)]$ stays bounded even without control (i.e.,
$u(t)\equiv0$). Furthermore, if $2A+C_{1}^{2}=0$, $B\neq0$, then
a state-feedback control policy $u(t)=Kx(t)$ with $BK<0$ achieves
stabilization, and moreover, for $x_{0}=0$, it guarantees the bound
$\mathbb{E}[u^{2}(t)]\leq-(1/2)D^{2}K/B$ for all $t\geq0$. This
bound can be made arbitrarily small by choosing small $|K|$. If $2A+C_{1}^{2}=0$,
$B=0$, then the system is uncontrollable and $\mathbb{E}[x^{2}(t)]$
grows unboundedly unless $D=0$. 

\subsection{Existence of instability-inducing noise matrices }

 The following proposition complements Theorem~\ref{TheoremContinuousTime}.
It shows that if $A,C_{1},\ldots,C_{\ell_{1}}$ satisfy \eqref{eq:A-cond-1},
then for any $B$ and $\hat{u}$, there exists a noise matrix $D$
that satisfies both \eqref{eq:nonzero-noise-1} and \eqref{eq:ubar-cond-1}.
Thus, by Theorem~\ref{TheoremContinuousTime}, the system with that
noise matrix is impossible to be stabilized under constraint \eqref{eq:average-u-bound-1}. 

\begin{proposition} \label{Proposition-D-destabilization} Assume
that there exist a nonnegative-definite Hermitian matrix $R\in\mathbb{C}^{n\times n}\setminus\{0\}$
and a scalar $\phi_{\mathrm{L}}>0$ that satisfy \eqref{eq:A-cond-1}.
Then for any $B\in\mathbb{R}^{n\times m}$ and $\hat{u}\in[0,\infty)$,
there exists $D\in\mathbb{R}^{n\times\ell_{2}}$ such that both \eqref{eq:nonzero-noise-1}
and \eqref{eq:ubar-cond-1} hold. \end{proposition}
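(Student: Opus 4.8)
The plan is to construct $D$ explicitly. Observe that the conclusion we need has two parts: $\mathrm{tr}(D^{\mathrm{T}}RD) > 0$ and the inequality \eqref{eq:ubar-cond-1}, which (after splitting on whether $\beta_{\mathrm{U}} = \lambda_{\max}(B^{\mathrm{T}}RB)$ vanishes) reduces, in the nontrivial case $\beta_{\mathrm{U}} \neq 0$, to $\hat{u} \beta_{\mathrm{U}} < \phi_{\mathrm{L}} \mathrm{tr}(D^{\mathrm{T}}RD)$. Since $\phi_{\mathrm{L}}$ and $\beta_{\mathrm{U}}$ are fixed once $R$ and $B$ are given, the right-hand side scales linearly in $\mathrm{tr}(D^{\mathrm{T}}RD)$, so the whole task is to exhibit \emph{some} $D$ making $\mathrm{tr}(D^{\mathrm{T}}RD)$ positive, and then scale it up.

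First I would note that $R \in \mathbb{C}^{n\times n}\setminus\{0\}$ is nonnegative-definite Hermitian, so $\lambda_{\max}(R) > 0$ and there is a unit eigenvector $v$ with $Rv = \lambda_{\max}(R) v$. Taking $D$ to have first column equal to a scalar multiple $\alpha v$ (with $v$ chosen real; since $R$ arises from a real-coefficient problem and is also permitted to be genuinely complex, one should be slightly careful — but in fact for the construction we only need a real vector $w \in \mathbb{R}^n$ with $w^{\mathrm{T}} R w > 0$, which exists because $R \neq 0$ and $R$ is nonnegative-definite: if $w^{\mathrm{T}} R w = 0$ for every real $w$ then by polarization $R = 0$) and all remaining columns zero, gives $\mathrm{tr}(D^{\mathrm{T}}RD) = \alpha^2 w^{\mathrm{T}} R w > 0$ whenever $\alpha \neq 0$. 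Then \eqref{eq:nonzero-noise-1} holds for any such $\alpha$.

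Next I would handle \eqref{eq:ubar-cond-1}. If $\beta_{\mathrm{U}} = 0$, the condition is $\hat{u} < \infty$, which is automatic since $\hat{u} \in [0,\infty)$, so any $\alpha \neq 0$ works. If $\beta_{\mathrm{U}} > 0$, the condition becomes $\mathrm{tr}(D^{\mathrm{T}}RD) > \hat{u}\beta_{\mathrm{U}}/\phi_{\mathrm{L}}$, i.e.\ $\alpha^2 w^{\mathrm{T}} R w > \hat{u}\beta_{\mathrm{U}}/\phi_{\mathrm{L}}$, which holds for all sufficiently large $|\alpha|$ — concretely any $\alpha$ with $\alpha^2 > \hat{u}\beta_{\mathrm{U}}/(\phi_{\mathrm{L}} w^{\mathrm{T}} R w)$. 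Choosing such an $\alpha$ and setting $D = [\alpha w, 0, \ldots, 0] \in \mathbb{R}^{n\times\ell_2}$ (assuming $\ell_2 \geq 1$; if $\ell_2$ were $0$ there would be no additive noise and the statement would be vacuous, so this is implicitly assumed) completes the construction, and then Theorem~\ref{TheoremContinuousTime} applies directly.

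I do not expect a serious obstacle here; the only subtlety is the justification that a \emph{real} vector $w$ with $w^{\mathrm{T}} R w > 0$ exists when $R$ is a nonzero Hermitian (possibly complex) nonnegative-definite matrix — this follows because $w \mapsto w^{\mathrm{T}} R w$ being identically zero on $\mathbb{R}^n$ forces, via $w^{\mathrm{T}} R w = w^{\mathrm{T}} \mathrm{Re}(R) w$ and polarization, that $\mathrm{Re}(R) = 0$, and combined with $R = R^*$ nonnegative-definite this forces $R = 0$, a contradiction. (Alternatively, if the paper intends $R$ to be real symmetric throughout, this is immediate.) Everything else is a one-line scaling argument plus an appeal to Theorem~\ref{TheoremContinuousTime}.
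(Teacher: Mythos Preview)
Your proposal is correct and follows essentially the same approach as the paper: place a suitably scaled real vector in the first column of $D$ (zeros elsewhere) so that $\mathrm{tr}(D^{\mathrm{T}}RD)$ is positive and, by scaling, exceeds $\hat{u}\beta_{\mathrm{U}}/\phi_{\mathrm{L}}$. The only cosmetic difference is that the paper constructs this real vector explicitly as a scaled standard basis vector $e_{\tilde{j}}$ (chosen via the spectral decomposition of $R$ so that $e_{\tilde{j}}^{\mathrm{T}}Re_{\tilde{j}}\geq\mu_{\tilde{i}}|\xi_{\tilde{i},\tilde{j}}|^{2}>0$), thereby sidestepping the polarization argument you use to produce a real $w$ with $w^{\mathrm{T}}Rw>0$.
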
 


\begin{inarxiv}\begin{proof} By the spectral theorem for Hermitian
matrices (see Theorem~2.5.6 of \cite{hornmatrixanalysis}), the nonnegative-definite
Hermitian matrix $R$ can be written as $R=\Xi\mathrm{diag}(\mu_{1},\ldots,\mu_{n})\Xi^{*}$,
where $\mu_{1},\ldots,\mu_{n}\geq0$ are the eigenvalues of $R$ and
$\Xi\in\mathbb{C}^{n\times n}\setminus\{0\}$ is a unitary matrix.
Let $\xi_{1},\ldots,\xi_{n}\in\mathbb{C}^{n}\setminus\{0\}$ denote
the columns of $\Xi$. We have $R=\sum_{i}\mu_{i}\xi_{i}\xi_{i}^{*}$.
Since $R\neq0$, at least one eigenvalue of $R$ is strictly larger
than $0$. Let $\tilde{i}\triangleq\min\{i\in\{1,\ldots,n\}\colon\mu_{i}>0\}$.
Now, let $\xi_{\tilde{i},j}\in\mathbb{C}$ denote the $j$th entry
of vector $\xi_{\tilde{i}}$. Since $\xi_{\tilde{i}}\neq0$, at least
one entry of $\xi_{\tilde{i}}$ is nonzero. We define $\tilde{j}\triangleq\min\{j\in\{1,\ldots,n\}\colon\xi_{\tilde{i},j}\neq0\}$.
Now let $\alpha>0$ and $D\triangleq[d_{1},\ldots,d_{\ell_{2}}]$,
where $d_{i}\in\mathbb{R}^{n}$, $i\in\{1,\ldots,\ell_{2}\}$, are
columns of $D$. We let $d_{i}=0$ for $i\neq1$, and set the entries
of $d_{1}\in\mathbb{R}^{n}$ as 
\begin{align}
d_{1,j} & =\begin{cases}
\frac{(\alpha+(\hat{u}\beta_{\mathrm{U}}/\phi_{\mathrm{L}}))^{1/2}}{\mu_{\tilde{i}}^{1/2}|\xi_{\tilde{i},\tilde{j}}|}, & \mathrm{if}\,\,j=\tilde{j},\\
0, & \mathrm{otherwise},
\end{cases}\label{eq:d1-def}
\end{align}
where $\beta_{\mathrm{U}}=\lambda_{\max}(B^{\mathrm{T}}RB)$. Since
$d_{i}=0$ for $i\neq1$, we have 
\begin{align}
 & \mathrm{tr}(D^{\mathrm{T}}RD)=\mathrm{tr}\left(\sum_{i=1}^{n}\mu_{i}D^{\mathrm{T}}\xi_{i}\xi_{i}^{*}D\right)\nonumber \\
 & \quad=\sum_{i=1}^{n}\mu_{i}\mathrm{tr}(D^{\mathrm{T}}\xi_{i}\xi_{i}^{*}D)=\sum_{i=1}^{n}\mu_{i}\sum_{j=1}^{\ell_{2}}\left(d_{j}^{\mathrm{T}}\xi_{i}\xi_{i}^{*}d_{\mathrm{j}}\right)\nonumber \\
 & \quad=\sum_{i=1}^{n}\mu_{i}d_{1}^{\mathrm{T}}\xi_{i}\xi_{i}^{*}d_{1}=d_{1}^{\mathrm{T}}\left(\sum_{i=1}^{n}\mu_{i}\xi_{i}\xi_{i}^{*}\right)d_{1}.\label{eq:d1-ineq}
\end{align}
Moreover, since $\sum_{i=1}^{n}\mu_{i}\xi_{i}\xi_{i}^{*}\geq\mu_{\tilde{i}}\xi_{\tilde{i}}\xi_{\tilde{i}}^{*}$,
it follows from \eqref{eq:d1-ineq} that $\mathrm{tr}(D^{\mathrm{T}}RD)\geq\mu_{\tilde{i}}d_{1}^{\mathrm{T}}\xi_{\tilde{i}}\xi_{\tilde{i}}^{*}d_{1}$.
By using this inequality and noting that $\mu_{\tilde{i}}>0$, we
obtain from \eqref{eq:d1-def} that 
\begin{align}
 & \mathrm{tr}(D^{\mathrm{T}}RD)\geq\mu_{\tilde{i}}\left(\sum_{j=1}^{n}d_{1,j}\xi_{\tilde{i},\tilde{j}}\right)\left(\sum_{j=1}^{n}\overline{\xi}_{\tilde{i},\tilde{j}}d_{1,j}\right)\nonumber \\
 & \quad=\mu_{\tilde{i}}\xi_{\tilde{i},\tilde{j}}\overline{\xi}_{\tilde{i},\tilde{j}}d_{1,\tilde{j}}^{2}=\mu_{\tilde{i}}|\xi_{\tilde{i},\tilde{j}}|^{2}\frac{\alpha+(\hat{u}\beta_{\mathrm{U}}/\phi_{\mathrm{L}})}{\mu_{\tilde{i}}|\xi_{\tilde{i},\tilde{j}}|^{2}}\nonumber \\
 & \quad=\alpha+(\hat{u}\beta_{\mathrm{U}}/\phi_{\mathrm{L}}),\label{eq:tr-ineq}
\end{align}
which implies \eqref{eq:nonzero-noise-1}, as $\alpha>0$, $\hat{u}\geq0$,
$\beta_{\mathrm{U}}\geq0$, and $\phi_{\mathrm{L}}>0$. If $\beta_{\mathrm{U}}=0$,
then the inequality \eqref{eq:ubar-cond-1} holds directly, as $\hat{u}<\infty$.
If $\beta_{\mathrm{U}}\neq0$, then the inequality \eqref{eq:tr-ineq}
implies $\mathrm{tr}(D^{\mathrm{T}}RD)>\hat{u}\beta_{\mathrm{U}}/\phi_{\mathrm{L}}$,
which in turn implies \eqref{eq:ubar-cond-1}. \end{proof} \end{inarxiv}

\section{Special Setting with Only Additive Noise }

\label{sec:Special-Setting-with}

In this section, we are interested in a special setting, where $\Psi(x(t))=0$
in \eqref{eq:continuous-time-system}. In this setting, the system
does not face multiplicative noise and it is only subject to additive
noise. 

We first present an improvement of the numerical approach presented
in Remark~\ref{RemarkNumerical}. Then we show that eigenstructure
of $A$ can be used to obtain new instability conditions that are
easier to check compared to Theorem~\ref{TheoremContinuousTime}.
The eigenstructure-based analysis was previously considered only for
discrete-time systems in \cite{cetinkaya2020TAC}. Here, we show that
continuous-time systems also allow a similar approach. 

\subsection{Candidate values of $\phi_{\mathrm{L}}$ in instability analysis}

Remark~\ref{RemarkNumerical} provides a numerical approach for checking
instability conditions of Theorem~\ref{TheoremContinuousTime}. This
approach is based on iterating over a set of candidate values of $\phi_{\mathrm{L}}$
and checking the feasibility of certain linear matrix inequalities.
In the general case with both additive and multiplicative noise, we
do not have prespecified bounds for the candidate value set. However,
in the case of only additive noise ($\Psi(x(t))=0$, i.e., $C_{i}=0$
in \eqref{eq:Psi-C-def}), the candidate values of $\phi_{\mathrm{L}}$
can be restricted to belong to the set $(0,2\vartheta_{\max}(A)]$,
where we define $\vartheta_{\max}\colon\mathbb{C}^{n\times n}\to\mathbb{R}$
as 
\begin{align}
\vartheta_{\max}(A) & \triangleq\max\{\mathrm{Re}(\lambda)\colon\lambda\in\mathrm{spec}(A)\}\label{eq:vartheta-max-def}
\end{align}
 with $\mathrm{spec}(A)\subset\mathbb{C}$ denoting the set of eigenvalues
of $A$. It is sufficient to choose the values of $\phi_{\mathrm{L}}$
from the set $(0,2\vartheta_{\max}(A)]$, as it is not possible to
satisfy \eqref{eq:A-cond-1} with $\phi_{\mathrm{L}}>2\vartheta_{\max}(A)$
and $R\neq0$, as shown in the following proposition. Here, we note
that with $C_{i}=0$, $i\in\{1,\ldots,\ell_{1}\}$, the inequality
\eqref{eq:A-cond-1} reduces to $A^{\mathrm{T}}R+RA\geq\phi_{\mathrm{L}}R$. 

\begin{proposition} Let $A\in\mathbb{C}^{n\times n}$. For every
nonnegative-definite Hermitian matrix $R\in\mathbb{C}^{n\times n}\setminus\{0\}$,
there exists $y\in\mathbb{C}^{n}\setminus\{0\}$ such that $R^{\frac{1}{2}}y\neq0$
and $y^{*}(A^{\mathrm{*}}R+RA)y\leq2\vartheta_{\max}(A)y^{*}Ry$,
where $\vartheta_{\max}(A)$ is defined in \eqref{eq:vartheta-max-def}.
\end{proposition}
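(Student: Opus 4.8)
The plan is to exhibit a specific $y$ built from a left eigenvector of $A$ associated with an eigenvalue of maximal real part, and to verify the two required properties by direct computation. Let $\lambda\in\mathrm{spec}(A)$ satisfy $\mathrm{Re}(\lambda)=\vartheta_{\max}(A)$, and let $v\in\mathbb{C}^{n}\setminus\{0\}$ be a corresponding \emph{left} eigenvector, i.e. $v^{*}A=\lambda v^{*}$, equivalently $A^{*}v=\overline{\lambda}v$. Then for this $v$ we have $v^{*}(A^{*}R+RA)v$ --- but this does not simplify because $R$ sits between $A^{*}$ and $v$ only on one side. So instead I would use a \emph{right} eigenvector: pick $w\in\mathbb{C}^{n}\setminus\{0\}$ with $Aw=\lambda w$. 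Then $w^{*}(A^{*}R+RA)w = (Aw)^{*}Rw + w^{*}R(Aw) = \overline{\lambda}\,w^{*}Rw + \lambda\,w^{*}Rw = 2\mathrm{Re}(\lambda)\,w^{*}Rw = 2\vartheta_{\max}(A)\,w^{*}Rw$. This already gives the inequality as an \emph{equality} for $y=w$, provided $w^{*}Rw$ is handled correctly; note $w^{*}Rw\geq0$ since $R$ is nonnegative-definite, so $2\vartheta_{\max}(A)w^{*}Rw$ is an honest upper bound (in fact equal).

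The remaining obstacle --- and the only real subtlety --- is the requirement $R^{\frac{1}{2}}y\neq0$, i.e. that the chosen eigenvector $w$ is not annihilated by $R$. A single right eigenvector for the top eigenvalue might lie in the kernel of $R$, so I cannot just take any such $w$. The fix is to work with the whole sum of generalized (or ordinary) eigenspaces for eigenvalues with real part equal to $\vartheta_{\max}(A)$, but more cleanly: argue by contradiction. Suppose that \emph{every} right eigenvector $w$ with $Aw=\lambda w$, ranging over all $\lambda$ with $\mathrm{Re}(\lambda)=\vartheta_{\max}(A)$, satisfies $R^{\frac{1}{2}}w=0$, i.e. $Rw=0$. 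Then the invariant subspace $\mathcal{M}$ spanned by all generalized eigenvectors associated with those top eigenvalues is contained in $\ker R$ (using that $R$ nonnegative-definite and $Rw=0$ forces $R$ to vanish on the generalized eigenvector chain above $w$ --- this needs a short Jordan-chain argument, or one can restrict to the semisimple part). Then $R$ is supported on the complementary invariant subspace where all eigenvalues of $A$ have real part $<\vartheta_{\max}(A)$; but then for \emph{any} $y$ with $R^{\frac{1}{2}}y\neq0$ one still needs the bound, so this route is getting complicated.

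The cleaner approach: since $R\neq0$, the range of $R^{\frac{1}{2}}$ is nonzero; I want a vector $y$ with $R^{\frac12}y\neq 0$ realizing the supremum of the Rayleigh-type quotient $y^{*}(A^{*}R+RA)y / y^{*}Ry$ over $\{y:R^{\frac12}y\neq0\}$. Substituting $z=R^{\frac{1}{2}}y$ reduces the quotient to $z^{*}(R^{\frac12})^{\dagger\,*}(A^{*}R+RA)(R^{\frac12})^{\dagger}z / \|z\|^{2}$ on $z$ in the range of $R^{\frac12}$, so the supremum is attained (finite-dimensional, quotient is continuous on the unit sphere of a subspace); let $y_{\star}$ attain it. It remains to show the attained value is $\leq2\vartheta_{\max}(A)$. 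For this I would use the matrix exponential: for the attaining $y_{\star}$, consider $\rho(t)\triangleq (e^{At}y_{\star})^{*}R(e^{At}y_{\star}) = y_{\star}^{*}e^{A^{*}t}Re^{At}y_{\star}$. Then $\rho'(0)=y_{\star}^{*}(A^{*}R+RA)y_{\star}$ and $\rho(0)=y_{\star}^{*}Ry_{\star}>0$. The growth rate of $\|e^{At}\|$ is governed by $\vartheta_{\max}(A)$: for every $\epsilon>0$ there is $M_{\epsilon}$ with $\|e^{At}\|\leq M_{\epsilon}e^{(\vartheta_{\max}(A)+\epsilon)t}$, hence $\rho(t)\leq M_{\epsilon}^{2}\lambda_{\max}(R)\|y_{\star}\|^{2}e^{2(\vartheta_{\max}(A)+\epsilon)t}$. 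Combining this ceiling on $\rho$ with the fact that $y_{\star}$ maximizes the logarithmic derivative (so $\rho$ grows at least at rate $\rho'(0)/\rho(0)$ in an infinitesimal sense, which by the maximality propagates along the trajectory --- here one shows $\rho(t)\geq\rho(0)e^{(\rho'(0)/\rho(0))t}$ using that $e^{As}y_{\star}$ is again a valid test vector so $\rho'(s)/\rho(s)\leq$ the max $=\rho'(0)/\rho(0)$ reversed --- wait, that gives $\rho'(s)\le c\,\rho(s)$, an upper Gronwall bound, the wrong direction).

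Because of this direction mismatch I expect the genuinely clean proof is the eigenvector one, and the real work is purely the kernel issue; so the plan I would commit to is: reduce to showing there is a right eigenvector $w$ of $A$ for a top-real-part eigenvalue with $Rw\neq0$. If that fails for all such eigenvalues, then $R$ annihilates the spectral (Riesz) projection range $\mathcal M$ onto those eigenvalues; using $R\ge 0$, $Rw=0\iff R^{\frac12}w=0\iff w^*Rw=0$, and a Jordan-chain induction shows $R$ annihilates all generalized eigenvectors over $w$ as well, so $\mathcal M\subseteq\ker R$. Then $R$ lives on the complementary $A$-invariant subspace $\mathcal N$ on which $\vartheta_{\max}(A|_{\mathcal N})<\vartheta_{\max}(A)$; but $R\ne0$ forces $\mathcal N\ne0$, and iterating the argument \emph{within} $\mathcal N$ (induction on dimension) yields $y\in\mathcal N$ with $R^{\frac12}y\neq0$ and $y^{*}(A^{*}R+RA)y=2\vartheta_{\max}(A|_{\mathcal N})y^{*}Ry\leq2\vartheta_{\max}(A)y^{*}Ry$. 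The base case is $n=1$, immediate. The main obstacle is the Jordan-chain step showing $R^{\frac12}w=0$ propagates up the chain; I would handle it by noting that for a chain $w_{0}=w,w_{1},\dots$ with $(A-\lambda)w_{k}=w_{k-1}$, nonnegativity of $R$ plus $w_{0}^{*}Rw_{0}=0$ gives $Rw_{0}=0$, then considering $R(w_{1}+tw_{0})$ and using $(w_{1}+tw_{0})^{*}R(w_{1}+tw_{0})=w_{1}^{*}Rw_{1}+2t\,\mathrm{Re}(w_{0}^{*}Rw_{1})=w_1^*Rw_1\ge0$ to pin down $w_{0}^{*}Rw_{1}=0$ for the quadratic-in-$t$ bound to stay nonnegative is automatic; the needed extra input is that the $A$-invariance of $\ker R$ is not free, so instead I would invoke that $\mathcal M^{\perp}$ (orthogonal complement of the top spectral subspace) is $A^{*}$-invariant and contains $\mathrm{range}(R^{\frac12})$, giving the reduction directly without Jordan bookkeeping.
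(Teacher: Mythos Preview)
Your committed approach has a genuine gap at the Jordan-chain step. You assert that if every ordinary eigenvector $w_0$ of $A$ associated with a top-real-part eigenvalue satisfies $Rw_0=0$, then the entire generalized eigenspace $\mathcal{M}$ for those eigenvalues lies in $\ker R$. This implication is false. Take $n=2$, $A=\left[\begin{smallmatrix}0&1\\0&0\end{smallmatrix}\right]$, $R=\left[\begin{smallmatrix}0&0\\0&1\end{smallmatrix}\right]$. The only eigenvalue is $0$ with sole eigenvector direction $e_1$, and indeed $Re_1=0$; yet the Jordan chain continues with $e_2$ and $Re_2\neq0$, so $\mathcal{M}=\mathbb{C}^2\not\subseteq\ker R$. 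Your fallback---asserting $\mathrm{range}(R^{1/2})\subseteq\mathcal{M}^\perp$---is equivalent to $\mathcal{M}\subseteq\ker R$ and hence circular. In this example there is no nontrivial complementary $A$-invariant subspace $\mathcal{N}$, so the induction cannot even begin.

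The paper avoids the trap by insisting neither on an \emph{ordinary} eigenvector nor on a \emph{top-real-part} eigenvalue. It invokes a lemma (Lemma~A.1 of \cite{cetinkaya2020TAC}) guaranteeing some eigenvalue $\hat{\lambda}$ of $A$ and a generalized eigenvector $\hat{\nu}$ with $R^{1/2}\hat{\nu}\neq0$ and $R^{1/2}A\hat{\nu}=\hat{\lambda}R^{1/2}\hat{\nu}$. The underlying mechanism is the \emph{correct} Jordan-chain argument, in the opposite direction from yours: since $R\neq0$ and generalized eigenvectors span $\mathbb{C}^n$, some chain $w_0,w_1,\ldots$ (with $(A-\hat{\lambda})w_k=w_{k-1}$, $w_{-1}=0$) has a \emph{smallest} index $j$ with $R^{1/2}w_j\neq0$; then $R^{1/2}Aw_j=R^{1/2}(\hat{\lambda}w_j+w_{j-1})=\hat{\lambda}R^{1/2}w_j$, so $\hat{\nu}=w_j$ works. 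Setting $y=\hat{\nu}$ yields $y^{*}(A^{*}R+RA)y=2\mathrm{Re}(\hat{\lambda})\,y^{*}Ry\leq2\vartheta_{\max}(A)\,y^{*}Ry$ in one line, with no induction. In the $2\times2$ counterexample this selects $\hat{\nu}=e_2$, $\hat{\lambda}=0$.
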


\begin{proof}It follows from Lemma~A.1 of \cite{cetinkaya2020TAC}
that $R^{\frac{1}{2}}A\hat{\nu}=\hat{\lambda}R^{\frac{1}{2}}\hat{\nu}$,
where $\hat{\lambda}\in\mathbb{C}$ is an eigenvalue of $A$ and $\hat{\nu}\in\mathbb{C}^{n}\setminus\{0\}$
is a generalized eigenvector of $A$ that satisfies $R^{\frac{1}{2}}\hat{\nu}\neq0$.
Let $y\triangleq\hat{\nu}$. We have $R^{\frac{1}{2}}y\neq0$. Moreover,
$y^{*}(A^{\mathrm{*}}R+RA)y=\overline{\hat{\lambda}}y^{*}Ry+\hat{\lambda}y^{*}Ry=2\mathrm{Re}(\hat{\lambda})y^{*}Ry$.
Therefore, $y^{*}(A^{\mathrm{*}}R+RA)y\leq2\vartheta_{\max}(A)y^{*}Ry,$
since $\mathrm{Re}(\hat{\lambda})\leq\vartheta_{\max}(A)$. \end{proof}

\subsection{Instability conditions based on the eigenstructure of $A$}

Even with the improvement discussed in the previous subsection, checking
feasibility of the linear matrix inequalities mentioned in Remark~\ref{RemarkNumerical}
can be computationally costly. In this subsection, we show that the
eigenstructure of the system matrix $A$ can be used to derive instability
conditions that can be checked numerically efficiently. 

Let $r\in\{1,\ldots,n\}$ denote the number of distinct eigenvalues
of $A$ and let $\lambda_{1},\lambda_{2},\ldots,\lambda_{r}\in\mathbb{C}$
with $\lambda_{i}\neq\lambda_{j}$ denote those eigenvalues. Moreover,
for every $i\in\{1,\ldots,r\}$, let $n_{i}\in\{1,\ldots,n\}$ represent
the geometric multiplicity of the eigenvalue $\lambda_{i}$. Eigenvalues
of $A$ are also the eigenvalues of $A^{\mathrm{T}}$ with the same
multiplicities. Thus, for every $\lambda_{i}$, there exists $n_{i}$
number of vectors $v_{i,j}\in\mathbb{C}^{n}$ such that
\begin{align}
A^{\mathrm{T}}v_{i,j} & =\lambda_{i}v_{i,j},\quad j\in\{1,\ldots,n_{i}\}.\label{eq:eigenvalue-eigenvector-pairs}
\end{align}
 These vectors $v_{i,1},\ldots,v_{i,n_{i}}$ are called the left-eigenvectors
of $A$ associated with the eigenvalue $\lambda_{i}$. We remark that
if $\lambda_{i}$ is a complex eigenvalue (i.e., $\lambda_{i}\notin\mathbb{R}$),
then the complex conjugate $\overline{\lambda_{i}}$ is also an eigenvalue
of $A$, and moreover, 
\begin{align}
v_{i,j}^{*}A & =\overline{\lambda_{i}}v_{i,j}^{*},\quad j\in\{1,\ldots,n_{i}\},\label{eq:left-eigenvector}
\end{align}
where $v_{i,j}^{\mathrm{*}}$ is the complex conjugate transpose of
$v_{i,j}\in\mathbb{C}^{n}$. In the instability conditions presented
below, we use the eigenvalues $\lambda_{i}$, and the left-eigenvectors
$v_{i,j}\in\mathbb{C}^{n}$, $j\in\{1,\ldots,n_{i}\}$, $i\in\{1,\ldots,r\}$.
Furthermore, we define 
\begin{align*}
 & \mathcal{I}\triangleq\{(i,j)\colon\mathrm{\mathrm{Re}(\lambda_{i})}>0,\,\,v_{i,j}^{*}DD^{\mathrm{T}}v_{i,j}>0,\\
 & \quad\,\,\,\,\,\,\,\,j\in\{1,\ldots,n_{i}\},\,i\in\{1,\ldots,r\}\}.
\end{align*}

\begin{corollary} \label{Corollary-Eigen} Consider the linear stochastic
system \eqref{eq:continuous-time-system} where $\Psi(x)=0$. Suppose
$\mathcal{I}\neq\emptyset$. If the control policy is $\mathcal{F}_{t}$-adapted
and satisfies \eqref{eq:average-u-bound-1} with 
\begin{align}
\hat{u} & <\max_{(i,j)\in\mathcal{I}}\varphi_{i,j},\label{eq:ubarcond2}
\end{align}
 where 
\begin{align*}
\varphi_{i,j} & \triangleq\begin{cases}
2\mathrm{Re}(\lambda_{i})v_{i,j}^{*}DD^{\mathrm{T}}v_{i,j}/\beta_{\mathrm{U}}^{i,j}, & \mathrm{if}\,\,\beta_{\mathrm{U}}^{i,j}\neq0,\\
\infty, & \mathrm{otherwise},
\end{cases}
\end{align*}
and $\beta_{\mathrm{U}}^{i,j}\triangleq\lambda_{\max}(B^{\mathrm{T}}v_{i,j}v_{i,j}^{*}B)$,
$(i,j)\in\mathcal{I}$, then the second moment of the state diverges
(i.e., \eqref{eq:divergence-1} holds) for any initial state $x_{0}\in\mathbb{R}^{n}$.

\end{corollary}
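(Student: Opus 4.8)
The plan is to reduce Corollary~\ref{Corollary-Eigen} to a direct application of Theorem~\ref{TheoremContinuousTime} by constructing, for each pair $(i,j)\in\mathcal{I}$, an appropriate nonnegative-definite Hermitian matrix $R$ and scalar $\phi_{\mathrm{L}}>0$ from the left-eigenvector data. The natural candidate is the rank-one matrix $R_{i,j}\triangleq v_{i,j}v_{i,j}^{*}$, which is nonnegative-definite Hermitian and nonzero since $v_{i,j}\neq0$. The key algebraic fact is that when $\Psi(x)=0$ (so all $C_i=0$), condition \eqref{eq:A-cond-1} reduces to $A^{\mathrm{T}}R+RA\geq\phi_{\mathrm{L}}R$, and for $R=R_{i,j}$ we can compute $A^{\mathrm{T}}R_{i,j}+R_{i,j}A$ using \eqref{eq:eigenvalue-eigenvector-pairs}--\eqref{eq:left-eigenvector}: since $A^{\mathrm{T}}v_{i,j}=\lambda_i v_{i,j}$ and $v_{i,j}^{*}A=\overline{\lambda_i}v_{i,j}^{*}$, we get $A^{\mathrm{T}}R_{i,j}+R_{i,j}A = (\lambda_i+\overline{\lambda_i})v_{i,j}v_{i,j}^{*} = 2\mathrm{Re}(\lambda_i)R_{i,j}$. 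Hence \eqref{eq:A-cond-1} holds with equality for $\phi_{\mathrm{L}}\triangleq 2\mathrm{Re}(\lambda_i)$, which is strictly positive precisely because $(i,j)\in\mathcal{I}$ guarantees $\mathrm{Re}(\lambda_i)>0$.

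Next I would verify the remaining hypotheses of Theorem~\ref{TheoremContinuousTime} for this choice. Condition \eqref{eq:nonzero-noise-1} becomes $\mathrm{tr}(D^{\mathrm{T}}R_{i,j}D) = \mathrm{tr}(D^{\mathrm{T}}v_{i,j}v_{i,j}^{*}D) = v_{i,j}^{*}DD^{\mathrm{T}}v_{i,j}$ (using cyclicity of the trace and $v_{i,j}^{*}DD^{\mathrm{T}}v_{i,j}$ being a scalar), which is strictly positive by the definition of $\mathcal{I}$. The quantity $\beta_{\mathrm{U}}$ in Theorem~\ref{TheoremContinuousTime} becomes $\lambda_{\max}(B^{\mathrm{T}}R_{i,j}B) = \lambda_{\max}(B^{\mathrm{T}}v_{i,j}v_{i,j}^{*}B) = \beta_{\mathrm{U}}^{i,j}$, exactly the constant appearing in the corollary. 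Then the threshold \eqref{eq:ubar-cond-1} instantiated with $R=R_{i,j}$, $\phi_{\mathrm{L}}=2\mathrm{Re}(\lambda_i)$ reads $\hat u < 2\mathrm{Re}(\lambda_i)v_{i,j}^{*}DD^{\mathrm{T}}v_{i,j}/\beta_{\mathrm{U}}^{i,j}$ when $\beta_{\mathrm{U}}^{i,j}\neq0$ and $\hat u<\infty$ otherwise — i.e., exactly $\hat u < \varphi_{i,j}$.

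The final step is to handle the maximum in \eqref{eq:ubarcond2}. If $\hat u < \max_{(i,j)\in\mathcal{I}}\varphi_{i,j}$, then there exists at least one pair $(i^{\star},j^{\star})\in\mathcal{I}$ achieving (or, since $\mathcal{I}$ may be infinite only through the finitely many index pairs — actually $\mathcal{I}$ is finite — simply attaining) $\hat u < \varphi_{i^{\star},j^{\star}}$. Applying Theorem~\ref{TheoremContinuousTime} with $R=v_{i^{\star},j^{\star}}v_{i^{\star},j^{\star}}^{*}$ and $\phi_{\mathrm{L}}=2\mathrm{Re}(\lambda_{i^{\star}})$ then yields $\lim_{t\to\infty}\mathbb{E}[\|x(t)\|^{2}]=\infty$ for every $x_0$, which is \eqref{eq:divergence-1}. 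One subtlety worth stating carefully: Theorem~\ref{TheoremContinuousTime} is stated for real matrices in parts, but $R$ here is genuinely complex Hermitian — however, the theorem's statement already allows $R\in\mathbb{C}^{n\times n}\setminus\{0\}$ nonnegative-definite Hermitian, so this is covered.

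I do not anticipate a real obstacle here; the result is essentially a corollary in the literal sense, and the only care needed is the bookkeeping that $\beta_{\mathrm{U}}^{i,j}=\lambda_{\max}(B^{\mathrm{T}}v_{i,j}v_{i,j}^{*}B)$ matches $\lambda_{\max}(B^{\mathrm{T}}RB)$ for the chosen $R$, and that $\mathrm{tr}(D^{\mathrm{T}}RD)=v_{i,j}^{*}DD^{\mathrm{T}}v_{i,j}$. The mildest point of friction is justifying the rank-one computation $A^{\mathrm{T}}v_{i,j}v_{i,j}^{*}+v_{i,j}v_{i,j}^{*}A = 2\mathrm{Re}(\lambda_i)v_{i,j}v_{i,j}^{*}$ cleanly using both \eqref{eq:eigenvalue-eigenvector-pairs} and its conjugate-transpose form \eqref{eq:left-eigenvector}; this is where I would be explicit, noting that \eqref{eq:left-eigenvector} gives $v_{i,j}^{*}A=\overline{\lambda_i}v_{i,j}^{*}$ and taking conjugate transpose of \eqref{eq:eigenvalue-eigenvector-pairs} gives $v_{i,j}^{*}A = \overline{\lambda_i}v_{i,j}^{*}$ as well (consistently), so that left-multiplying \eqref{eq:eigenvalue-eigenvector-pairs} by $v_{i,j}^{*}$ on the appropriate side is unnecessary and the identity follows by direct substitution.
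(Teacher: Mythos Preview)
Your proposal is correct and follows essentially the same route as the paper: set $R=v_{i,j}v_{i,j}^{*}$ and $\phi_{\mathrm{L}}=2\mathrm{Re}(\lambda_{i})$, verify \eqref{eq:A-cond-1}--\eqref{eq:ubar-cond-1} reduce to the definition of $\varphi_{i,j}$, and then pick the maximizing pair to invoke Theorem~\ref{TheoremContinuousTime}. Your bookkeeping (the trace identity $\mathrm{tr}(D^{\mathrm{T}}v_{i,j}v_{i,j}^{*}D)=v_{i,j}^{*}DD^{\mathrm{T}}v_{i,j}$, the identification $\beta_{\mathrm{U}}=\beta_{\mathrm{U}}^{i,j}$, and the remark that Theorem~\ref{TheoremContinuousTime} already permits complex Hermitian $R$) matches the paper's argument point for point.
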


\begin{proof} By \eqref{eq:eigenvalue-eigenvector-pairs} and \eqref{eq:left-eigenvector},
for each $(i,j)\in\mathcal{I}$, we obtain $A^{\mathrm{T}}v_{i,j}v_{i,j}^{*}+v_{i,j}v_{i,j}^{*}A=\lambda_{i}v_{i,j}v_{i,j}^{*}+\overline{\lambda_{i}}v_{i,j}v_{i,j}^{*}=2\mathrm{Re}(\lambda_{i})v_{i,j}v_{i,j}^{*}.$
Since $v_{i,j}v_{i,j}^{*}\in\mathbb{C}^{n\times n}$ is a nonnegative-definite
Hermitian matrix, we have \eqref{eq:A-cond-1} with $R=v_{i,j}v_{i,j}^{*}$
and $\phi_{\mathrm{L}}=\mathrm{Re}(\lambda_{i})$. Moreover, by the
definition of $\mathcal{I}$, we have $\mathrm{tr}(D^{\mathrm{T}}v_{i,j}v_{i,j}^{*}D)=v_{i,j}^{*}DD^{\mathrm{T}}v_{i,j}>0$
for $(i,j)\in\mathcal{I}$. Thus, \eqref{eq:nonzero-noise-1} holds
with $R=v_{i,j}v_{i,j}^{*}$. Now, since \eqref{eq:A-cond-1} and
\eqref{eq:nonzero-noise-1} both hold for each $(i,j)\in\mathcal{I}$,
it follows from Theorem~\ref{TheoremContinuousTime} by setting $\beta_{\mathrm{U}}=\beta_{\mathrm{U}}^{i,j}$
that under control policies satisfying \eqref{eq:average-u-bound-1}
with $\hat{u}<\varphi_{i,j}$, the second moment of the state diverges.
Finally, \eqref{eq:ubarcond2} implies that there exists $(\tilde{i},\tilde{j})\in\mathcal{I}$
such that $\hat{u}<\varphi_{\tilde{i},\tilde{j}}$, implying divergence.
\end{proof}

\section{Numerical Examples}

\label{sec:Numerical-Example}

\begin{inpaper}

We consider two example systems and obtain ranges of $\hat{u}$ such
that these systems cannot be stabilized under the constraint \eqref{eq:average-u-bound-1}
when $\hat{u}$ is chosen from those ranges. 

\subsubsection*{Example 1}

Consider \eqref{eq:continuous-time-system} and \eqref{eq:Psi-C-def}
with 
\begin{align}
 & A=\left[\begin{array}{rr}
0 & 1\\
-1 & a
\end{array}\right],\,\,B=\left[\begin{array}{c}
0\\
1
\end{array}\right],\,\,D=\left[\begin{array}{c}
1\\
1
\end{array}\right],\nonumber \\
 & \ell_{1}=2,\,\,C_{1}=\left[\begin{array}{cc}
0 & 0\\
c_{1} & 0
\end{array}\right],\,\,C_{2}=\left[\begin{array}{cc}
0 & 0\\
0 & c_{2}
\end{array}\right],
\end{align}
where $a,c_{1},c_{2}\in\mathbb{R}$ are scalar coefficients. The case
with $a\geq0$ corresponds to the linearized dynamics of the forced
Van der Pol oscillator (see Section~5.3.1 of \cite{lakshmanan2012nonlinear}). 

For $a=-0.5$, $A$ is Hurwitz-stable; however, under multiplicative
noise with $c_{1}=c_{2}=2$, Theorem~3.1 indicates that the system
is instabilizable for $\hat{u}\in[0,7.4)$. For $a=1.5$, $c_{1}=c_{2}=2$,
the instabilizability range is $\hat{u}\in[0,10.8)$. Further, for
$a=1.5$, $c_{1}=c_{2}=0$ (i.e., no multiplicative noise), Theorem~3.1
provides the instabilizability range $\hat{u}\in[0,1)$ and Corollary~\ref{Corollary-Eigen}
provides a smaller range $\hat{u}\in[0,0.75)$. 

\subsubsection*{Example 2}

Consider \eqref{eq:continuous-time-system} and \eqref{eq:Psi-C-def}
with 
\begin{align*}
A & =\left[\begin{array}{cccc}
0 & 1 & 0 & 0\\
3\zeta^{2} & 0 & 0 & 2\zeta\\
0 & 0 & 0 & 1\\
0 & -2\zeta & 0 & 0
\end{array}\right],\,\,B=D=\left[\begin{array}{cc}
0 & 0\\
1 & 0\\
0 & 0\\
0 & 1
\end{array}\right],
\end{align*}
 $\ell_{1}=1$, $C_{1}=I_{4}$, where $\zeta$ is a scalar coefficient.
This system is a noisy version of the uncoupled, linearized, and normalized
dynamics that describes a satellite's motion in the equatorial plane,
as provided in \cite{fortmann1977introduction}. We check feasibility
of the linear matrix inequalities discussed in Remark~\ref{RemarkNumerical}
to assess the conditions of Theorem~\ref{TheoremContinuousTime}.
For $\zeta=1$ and $\hat{u}\in[0,1.1)$, the conditions of Theorem
3.1 hold, and thus, the system is instabilizable under the control
constraint \eqref{eq:average-u-bound-1}. 


\end{inpaper}

\begin{inarxiv} 

In this section, we illustrate our results on two example continuous-time
stochastic systems. 

\subsubsection*{Example 1 }

Consider the continuous-time stochastic system described by \eqref{eq:continuous-time-system}
and \eqref{eq:Psi-C-def} with {\fontsize{9.0}{11.7}\selectfont
\begin{align}
 & A=\left[\begin{array}{rr}
0 & 1\\
-1 & a
\end{array}\right],\,\,B=\left[\begin{array}{c}
0\\
1
\end{array}\right],\,\,D=d\left[\begin{array}{c}
1\\
1
\end{array}\right],\label{eq:example-system-def}\\
 & \ell_{1}=2,\,\,C_{1}=\left[\begin{array}{cc}
0 & 0\\
c_{1} & 0
\end{array}\right],\,\,C_{2}=\left[\begin{array}{cc}
0 & 0\\
0 & c_{2}
\end{array}\right],
\end{align}
}where $a,c_{1},c_{2},d\in\mathbb{R}$ are scalar coefficients. The
case with $a\geq0$ corresponds to the linearized dynamics of the
forced Van der Pol oscillator (see Section~5.3.1 of \cite{lakshmanan2012nonlinear}). 

In each row of Table~\ref{TableResults}, we consider a different
setting for $a,c_{1},c_{2},d\in\mathbb{R}$. Our goal is to obtain
ranges of $\hat{u}$ such that the system cannot be stabilized with
constrained control policies satisfying \eqref{eq:average-u-bound-1}
with $\hat{u}$ chosen in the given range. To obtain the range for
each parameter setting, we apply Theorem~\ref{TheoremContinuousTime}.
In addition, when there is no multiplicative noise (i.e., $c_{1}=c_{2}=0$),
we also apply Corollary~\ref{Corollary-Eigen}. 

\begin{table}
\centering

{\centering \renewcommand{\arraystretch}{0.8} \fontsize{7}{11.52}\selectfont 
\begin{tabular}{|c|r|c|c|c||l|l|}
\hline 
Setting \# & $a$ & $c_{1}$ & $c_{2}$ & $d$ & Applied Result & Range of $\hat{u}$\tabularnewline
\hline 
\hline 
1 & $-0.5$ & $2$ & $2$ & $1$ & Theorem~\ref{TheoremContinuousTime} & $[0,7.4)$\tabularnewline
\hline 
2 & $1.5$ & $2$ & $2$ & $1$ & Theorem~\ref{TheoremContinuousTime} & $[0,10.8)$\tabularnewline
\hline 
3 & $1.5$ & $2$ & $0$ & $1$ & Theorem~\ref{TheoremContinuousTime} & $[0,8.4)$\tabularnewline
\hline 
4 & $1.5$ & $0$ & $2$ & $1$ & Theorem~\ref{TheoremContinuousTime} & $[0,4.5)$\tabularnewline
\hline 
\multirow{2}{*}{5} & \multirow{2}{*}{$1.5$} & \multirow{2}{*}{$0$} & \multirow{2}{*}{$0$} & \multirow{2}{*}{$1$} & Theorem~\ref{TheoremContinuousTime} & $[0,1)$\tabularnewline
\cline{6-7} \cline{7-7} 
 &  &  &  &  & Corollary~\ref{Corollary-Eigen} & $[0,0.75)$ \tabularnewline
\hline 
\multirow{2}{*}{6} & \multirow{2}{*}{$1.5$} & \multirow{2}{*}{$0$} & \multirow{2}{*}{$0$} & \multirow{2}{*}{$2$} & Theorem~\ref{TheoremContinuousTime} & $[0,4.1)$\tabularnewline
\cline{6-7} \cline{7-7} 
 &  &  &  &  & Corollary~\ref{Corollary-Eigen} & $[0,3)$ \tabularnewline
\hline 
\end{tabular}}

\caption{Ranges of $\hat{u}$ for which stabilization is impossible. }
 \label{TableResults} 
\end{table}

Setting~1 in Table~\ref{TableResults} represents the case where
$A$ is a Hurwitz-stable matrix. Notice that when $A$ is Hurwitz-stable,
without multiplicative noise, the second moment of the state of the
uncontrolled system would remain bounded. However, as Table~\ref{TableResults}
indicates, under multiplicative noise (with parameters $c_{1}=c_{2}=2$),
the system is unstable under any constrained control that satisfies
\eqref{eq:average-u-bound-1} with $\hat{u}\in[0,7.4)$. 

Settings 2--6 in Table~\ref{TableResults} represent different scenarios
where $A$ is strictly unstable. In each of those settings, different
noise parameters are considered. The main observation is that systems
with increased noise levels are harder to stabilize with constrained
controllers. 

Settings~5 and 6 in Table~\ref{TableResults} correspond to the
cases where there is no multiplicative noise. In those cases Corollary~\ref{Corollary-Eigen}
can be applied. Notice that Corollary~\ref{Corollary-Eigen} provides
smaller ranges for $\hat{u}$ compared to Theorem~\ref{TheoremContinuousTime}.
This shows that Corollary~\ref{Corollary-Eigen} is conservative.
We note that the main advantage of Corollary~\ref{Corollary-Eigen}
is that its conditions can be checked faster than those of Theorem~\ref{TheoremContinuousTime}. 

For checking conditions of Corollary~\ref{Corollary-Eigen}, we can
speedily compute $\max_{(i,j)\in\mathcal{I}}\varphi_{i,j}$ in \eqref{eq:ubarcond2}.
In particular, the computation yields the analytical expression 
\begin{align}
\max_{(i,j)\in\mathcal{I}}\varphi_{i,j} & =\begin{cases}
d^{2}(2-a)a, & a\in[0,2),\\
4d^{2}(\frac{1}{2}a-1), & a\geq2,
\end{cases}\label{eq:max_psi_eq}
\end{align}
which is also shown in Fig.~\ref{Figure3d}. Given $a$ and $d$,
if $\hat{u}<\max_{(i,j)\in\mathcal{I}}\varphi_{i,j}$ (i.e., the value
is below the surface in Fig.~\ref{Figure3d}), then Corollary~\ref{Corollary-Eigen}
implies that stabilization of \eqref{eq:continuous-time-system} is
impossible with a control policy that satisfies \eqref{eq:average-u-bound-1}
with that particular $\hat{u}$. Notice that $\max_{(i,j)\in\mathcal{I}}\varphi_{i,j}$
is a quadratic function of $d$, and thus, for larger values of $d$,
the value of $\max_{(i,j)\in\mathcal{I}}\varphi_{i,j}$ becomes larger.
This result is intuitive in the sense that stabilization becomes harder
under stronger noise. On the other hand, $\max_{(i,j)\in\mathcal{I}}\varphi_{i,j}$
depends on $a$ in a nonlinear nonmonotonic way. It follows from \eqref{eq:max_psi_eq}
that for $a\in[0,2)$, the maximum of $\max_{(i,j)\in\mathcal{I}}\varphi_{i,j}$
is achieved when $a=1$. For $a\geq2$, $\max_{(i,j)\in\mathcal{I}}\varphi_{i,j}$
increases as $a$ increases. 

\begin{figure}[t]
\centering \includegraphics[width=0.85\columnwidth]{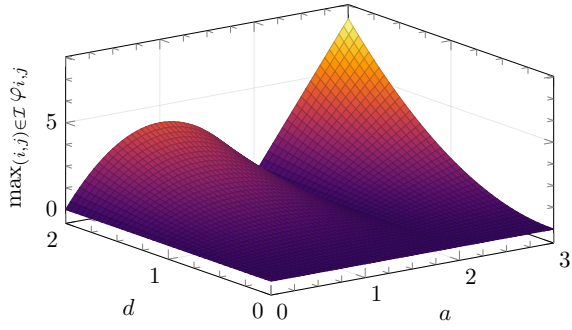}

\caption{The value of $\max_{(i,j)\in\mathcal{I}}\varphi_{i,j}$ in \eqref{eq:ubarcond2}.
If \eqref{eq:average-u-bound-1} holds with $\hat{u}<\max_{(i,j)\in\mathcal{I}}\varphi_{i,j}$,
then stabilization is impossible.}
 \label{Figure3d} 
\end{figure}

\subsubsection*{Example 2 }

Consider \eqref{eq:continuous-time-system} and \eqref{eq:Psi-C-def}
with 
\begin{align*}
A & =\left[\begin{array}{cccc}
0 & 1 & 0 & 0\\
3\zeta^{2} & 0 & 0 & 2\zeta\\
0 & 0 & 0 & 1\\
0 & -2\zeta & 0 & 0
\end{array}\right],\,\,B=D=\left[\begin{array}{cc}
0 & 0\\
1 & 0\\
0 & 0\\
0 & 1
\end{array}\right],\\
\ell_{1} & =1,\,\,C_{1}=I_{4}.
\end{align*}
 This system is a noisy version of the uncoupled, linearized, and
normalized dynamics that describes a satellite's motion in the equatorial
plane, as provided in \cite{fortmann1977introduction}. The scalar
$\zeta$ is the angular velocity of the equatorial orbit along which
the system is linearized and the control input $u(t)\in\mathbb{R}^{2}$
is the vector of thrusts applied to the satellite in the equatorial
plane. We consider the control input constraint \eqref{eq:average-u-bound-1}
with the average second moment bound~$\hat{u}$. 

We check feasibility of the linear matrix inequalities discussed in
Remark~\ref{RemarkNumerical} to assess the conditions of Theorem~\ref{TheoremContinuousTime}
for different values of $\zeta$ and $\hat{u}$. When $\zeta=0.1$,
the conditions of Theorem 3.1 hold for $\hat{u}\in[0,1.7)$. Thus
the system is instabilizable under the control constraint \eqref{eq:average-u-bound-1}
with those values of $\hat{u}$. On the other hand, with $\zeta=1$,
the corresponding instabilizability range  is obtained as $\hat{u}\in[0,1.1)$. 

\end{inarxiv}

\section{Conclusion}

\label{sec:Conclusion}

We have investigated the constrained control problem for linear stochastic
systems with additive and multiplicative noise terms. We have shown
that in certain scenarios, stabilization is impossible to achieve
with control policies that have bounded time-averaged second moments.
In particular, we have obtained conditions, under which the second
moment of the system state diverges regardless of the controller design
and regardless of the initial state. Moreover, we have showed the
tightness of our results for scalar systems and provided extensions
for partially-constrained control policies and additive-only noise
settings. 

\bibliographystyle{ieeetr}
\bibliography{references}

\begin{thebibliography}{10}

\bibitem{tarbouriech2011stability}
S.~Tarbouriech, G.~Garcia, J.~M.~G. {da Silva Jr.}, and I.~Queinnec, {\em
  Stability and stabilization of linear systems with saturating actuators}.
\newblock Springer, 2011.

\bibitem{book-saberi2012internal}
A.~Saberi, A.~A. Stoorvogel, and P.~Sannuti, {\em Internal and External
  Stabilization of Linear Systems with Constraints}.
\newblock Springer, 2012.

\bibitem{necessary-condition-null-controllability-sontag1984algebraic}
E.~D. Sontag, ``An algebraic approach to bounded controllability of linear
  systems,'' {\em Int. J. Control}, vol.~39, no.~1, pp.~181--188, 1984.

\bibitem{sussmann1994}
H.~J. Sussmann, E.~D. Sontag, and Y.~Yang, ``A general result on the
  stabilization of linear systems using bounded controls,'' {\em IEEE Trans.
  Automat. Control}, vol.~39, no.~12, pp.~2411--2425, 1994.

\bibitem{stochastic-mpc-hokayem2012stochastic}
P.~Hokayem, E.~Cinquemani, D.~Chatterjee, F.~Ramponi, and J.~Lygeros,
  ``Stochastic receding horizon control with output feedback and bounded
  controls,'' {\em Automatica}, vol.~48, no.~1, pp.~77--88, 2012.

\bibitem{stochastic-average-constraint-korda2014stochastic}
M.~Korda, R.~Gondhalekar, F.~Oldewurtel, and C.~N. Jones, ``Stochastic {MPC}
  framework for controlling the average constraint violation,'' {\em IEEE
  Trans. Automat. Control}, vol.~59, no.~7, pp.~1706--1721, 2014.

\bibitem{hewing2019scenario}
L.~Hewing and M.~N. Zeilinger, ``Scenario-based probabilistic reachable sets
  for recursively feasible stochastic model predictive control,'' {\em IEEE
  Control Syst. Lettr.}, vol.~4, no.~2, pp.~450--455, 2019.

\bibitem{mark2020}
C.~Mark and S.~Liu, ``Stochastic {MPC} with distributionally robust chance
  constraints,'' {\em IFAC-PapersOnLine}, vol.~53, no.~2, pp.~7136--7141, 2020.

\bibitem{pereira2019learning}
M.~A. Pereira and Z.~Wang, ``Learning deep stochastic optimal control policies
  using forward-backward {SDE}s,'' in {\em Robotics: Science and Systems},
  2019.

\bibitem{wang2020reinforcement}
H.~Wang, T.~Zariphopoulou, and X.~Y. Zhou, ``Reinforcement learning in
  continuous time and space: A stochastic control approach,'' {\em J. Mach.
  Learn. Res.}, vol.~21, no.~198, pp.~1--34, 2020.

\bibitem{chang2021}
W.-J. Chang, Y.-W. Lin, Y.-H. Lin, C.-L. Pen, and M.-H. Tsai, ``Actuator
  saturated fuzzy controller design for interval type-2 {Takagi-Sugeno} fuzzy
  models with multiplicative noises,'' {\em Processes}, vol.~9, no.~5, 2021.

\bibitem{ying2006stochastically}
Z.~G. Ying and W.~Q. Zhu, ``A stochastically averaged optimal control strategy
  for quasi-{H}amiltonian systems with actuator saturation,'' {\em Automatica},
  vol.~42, no.~9, pp.~1577--1582, 2006.

\bibitem{stochastic-design-min2018output}
H.~Min, S.~Xu, B.~Zhang, and Q.~Ma, ``Output-feedback control for stochastic
  nonlinear systems subject to input saturation and time-varying delay,'' {\em
  IEEE Trans. Automat. Control}, vol.~64, no.~1, pp.~359--364, 2018.

\bibitem{stochastic-networked-mishra2018sparse}
P.~K. Mishra, D.~Chatterjee, and D.~E. Quevedo, ``Sparse and constrained
  stochastic predictive control for networked systems,'' {\em Automatica},
  vol.~87, pp.~40--51, 2018.

\bibitem{chatterjee2012onmeansquare}
D.~Chatterjee, F.~Ramponi, P.~Hokayem, and J.~Lygeros, ``On mean square
  boundedness of stochastic linear systems with bounded controls,'' {\em Syst.
  Control Lettr.}, vol.~61, pp.~375--380, 2012.

\bibitem{cetinkaya2020TAC}
A.~Cetinkaya and M.~Kishida, ``Impossibility results for constrained control of
  stochastic systems,'' {\em IEEE Trans. Automat. Control}, (to appear), 2021.
\newblock \url{https://doi.org/10.1109/TAC.2021.3059842}.

\bibitem{elghaoui1995}
L.~{El Ghaoui}, ``State-feedback control of systems with multiplicative noise
  via linear matrix inequalities,'' {\em Syst. Control Lettr.}, vol.~24, no.~3,
  pp.~223--228, 1995.

\bibitem{khasminskii2011stochastic}
R.~Khasminskii, {\em Stochastic Stability of Differential Equations}.
\newblock Springer, 2011.

\bibitem{king2003differential}
A.~C. King, J.~Billingham, and S.~R. Otto, {\em Differential {E}quations:
  {L}inear, {N}onlinear, {O}rdinary, {P}artial}.
\newblock Cambridge University Press, 2003.

\bibitem{oksendal2005stochastic}
B.~{\O}ksendal, {\em Stochastic Differential Equations}.
\newblock Springer, 2005.

\bibitem{billingsley1986}
P.~Billingsley, {\em Probability and Measure}.
\newblock Wiley, 2012.

\bibitem{mao2007stochastic}
X.~Mao, {\em Stochastic Differential Equations and Applications}.
\newblock Elsevier, 2007.

\bibitem{hornmatrixanalysis}
R.~A. Horn and C.~R. Johnson, {\em Matrix Analysis}.
\newblock Cambridge University Press, 1985.

\bibitem{lakshmanan2012nonlinear}
M.~Lakshmanan and S.~Rajaseekar, {\em Nonlinear {D}ynamics: {I}ntegrability,
  Chaos and Patterns}.
\newblock Springer, 2012.

\bibitem{fortmann1977introduction}
T.~E. Fortmann and K.~L. Hitz, {\em An Introduction to Linear Control Systems}.
\newblock Marcel Dekker Inc., 1977.

\end{thebibliography}

\end{document}